\documentclass[a4paper,UKenglish,cleveref, autoref, thm-restate]{lipics-v2019}
%This is a template for producing LIPIcs articles. 
%See lipics-manual.pdf for further information.
%for A4 paper format use option "a4paper", for US-letter use option "letterpaper"
%for british hyphenation rules use option "UKenglish", for american hyphenation rules use option "USenglish"
%for section-numbered lemmas etc., use "numberwithinsect"
%for enabling cleveref support, use "cleveref"
%for enabling autoref support, use "autoref"
%for anonymousing the authors (e.g. for double-blind review), add "anonymous"
%for enabling thm-restate support, use "thm-restate"

%\graphicspath{{./graphics/}}%helpful if your graphic files are in another directory
\usepackage{dirtytalk}%Sayan

\newtheorem{observation}[theorem]{Observation}%Sayan

\newcommand{\cb}{\mathcal{B}}
\newcommand{\co}{\mathcal{O}}
\newcommand{\ca}{\mathcal{A}}

\newcommand{\ch}{\mathcal{H}}

\newcommand{\cl}{\mathcal{L}}

\newcommand{\cost}{\mathrm{cost}}
\newcommand{\sol}{\sigma}
\newcommand{\solbar}{\overline{\sigma}}

\bibliographystyle{plainurl}% the mandatory bibstyle

\title{Improved Bounds for Metric Capacitated Covering Problems}

\titlerunning{Improved Bounds for Metric Capacitated Covering Problems} % optional, please use if title is longer than one line

\author{Sayan Bandyapadhyay}{Department of Informatics, University of Bergen, Norway}{sayan.bandyapadhyay@gmail.com}{https://orcid.org/0000-0001-8875-0102}{}% mandatory, please use full name; only 1 author per \author macro; first two parameters are mandatory, other parameters can be empty. Please provide at least the name of the affiliation and the country. The full address is optional

\authorrunning{S. Bandyapadhyay} % mandatory. First: Use abbreviated first/middle names. Second (only in severe cases): Use first author plus 'et al.'

\Copyright{Sayan Bandyapadhyay} % mandatory, please use full first names. LIPIcs license is "CC-BY";  http://creativecommons.org/licenses/by/3.0/

\ccsdesc[500]{Theory of computation~Computational geometry} 
\ccsdesc[300]{Mathematics of computing~Approximation algorithms} % mandatory: Please choose ACM 2012 classifications from https://dl.acm.org/ccs/ccs_flat.cfm 

\keywords{Capacitated covering, approximation algorithms, bicriteria approximation, LP rounding} % mandatory; please add comma-separated list of keywords

\category{} %optional, e.g. invited paper

\relatedversion{} %optional, e.g. full version hosted on arXiv, HAL, or other respository/website
%\relatedversion{A full version of the paper is available at \url{...}.}

\supplement{}%optional, e.g. related research data, source code, ... hosted on a repository like zenodo, figshare, GitHub, ...

\funding{This work is partly supported by the Research Council of Norway via the project ``MULTIVAL''.}%optional, to capture a funding statement, which applies to all authors. Please enter author specific funding statements as fifth argument of the \author macro.

\acknowledgements{I am indebted to Tanmay Inamdar for giving invaluable feedback on this work. I also thank the anonymous reviewers whose suggestions have helped to further improve the quality of the paper.}%optional

\nolinenumbers %uncomment to disable line numbering

%\hideLIPIcs  %uncomment to remove references to LIPIcs series (logo, DOI, ...), e.g. when preparing a pre-final version to be uploaded to arXiv or another public repository

\begin{document}
\maketitle

\begin{abstract}
In the Metric Capacitated Covering (MCC) problem, given a set of balls $\cb$ in a metric space $P$ with metric $d$ and a capacity parameter $U$, the goal is to find a minimum sized subset ${\cb}'\subseteq \cb$ and an assignment of the points in $P$ to the balls in $\cb'$ such that each point is assigned to a ball that contains it and each ball is assigned with at most $U$ points. MCC achieves an $O(\log |P|)$-approximation using a greedy algorithm. On the other hand, it is hard to approximate within a factor of $o(\log |P|)$ even with $\beta < 3$ factor expansion of the balls. Bandyapadhyay~{et al.} [SoCG 2018, DCG 2019] showed that one can obtain an $O(1)$-approximation for the problem with $6.47$ factor expansion of the balls. An open question left by their work is to reduce the gap between the lower bound $3$ and the upper bound $6.47$. In this current work, we show that it is possible to obtain an $O(1)$-approximation with only $4.24$ factor expansion of the balls. We also show a similar upper bound of $5$ for a more generalized version of MCC for which the best previously known bound was $9$.           
\end{abstract}
%\tableofcontents
%\newpage
%\clearpage
%\pagenumbering{arabic} 

\section{Introduction}
In any metric space $P$ with metric $d$, a ball $B(c,r)$ with center $c\in P$ and radius $r$ is defined as the set of points at a distance at most $r$ from $c$, i.e., $B(c,r)=\{p\in P\mid d(c,p) \le r\}$. In the \textit{Metric Capacitated Covering} (MCC) problem, we are given a set of balls $\cb$ in the metric space $P$ with metric $d$. We are also given a capacity parameter $U\in \mathbb{N}$ for the balls. The goal is to find a minimum sized subset ${\cb}'\subseteq \cb$ and an assignment $\phi: P\rightarrow {\cb}'$ such that for any point $p\in P$, the ball $\phi(p)$ contains $p$ and the number of points assigned to a ball $B\in {\cb}'$ via $\phi$ is at most $U$, i.e., $\vert {\phi}^{-1}(B)\vert \le U$. For $B_i\in \cb$, we denote its center and radius by $c_i$ and $r_i$, respectively.  

The greedy algorithm of \cite{Wolsey82} yields an $O(\log |P|)$-approximation for MCC. Indeed, this approximation factor is tight, which can be proved using the following simple reduction from set cover. For each element, add a point. For each set, add a ball of radius 1. If an element is in a set, then the distance between the center of the corresponding ball and the corresponding point is set to 1. Consider the metric space induced by the centers and the points. The capacity of each ball is set to the total number of elements, say $n$. Now, if there is a set cover of size $k$, then all the points can be covered by $k$ balls without violating the capacities. The converse is also true. As set cover is hard to approximate within a factor of $o(\log n)$ under standard complexity theoretic assumptions \cite{Feige98}, it is not possible to find an approximation for MCC which is asymptotically better than $O(\log n)$. 

As it is not possible to obtain a $o(\log n)$-approximation for MCC, researchers have focused on obtaining bicriteria approximation. An $(\alpha,\beta)$ bicriteria approximation for MCC is a solution where the balls can be expanded by a factor of $\beta$ (i.e., for a ball $B_i\in \cb$ and a point $p_j$ assigned to $B_i$, $d(c_i,p_j)\le \beta \cdot r_i$) and the size of the solution is at most $\alpha$ times the optimum solution size (that does not expand the balls). From the above reduction, it follows that no $(o(\log n),\beta)$ bicriteria approximation is possible for MCC under standard complexity theoretic assumptions for any $\beta < 3$. This is true, as in the construction for a ball $B_i$ that does not contain a point $p_j$, the distance between $c_i$ and $p_j$ is at least $3$. Thus, with less than $3$ factor expansion, $B_i$ cannot contain any more points than before.  

On the positive side, Bandyapadhyay~{et al.}~\cite{bandyapadhyay2019capacitated} obtained an $(O(1),6.47)$ bicriteria approximation for the problem, i.e., with only a $6.47$ factor expansion of the balls it is possible to obtain a constant approximation. Their algorithm is based on rounding of the natural LP relaxation of MCC. One problem that was left open by the work of \cite{bandyapadhyay2019capacitated} is to reduce the gap between the lower bound $3$ and the upper bound $6.47$. Thus, for what possible value of $3\le \beta < 6.47$ can one  obtain an $(O(1),\beta)$ bicriteria approximation for MCC? They also consider a generalization of MCC -- \textit{Metric Monotonic Capacitated Covering} (MMCC). This problem is similar to MCC except each ball $B_i$ has its individual capacity $U_i\in \mathbb{N}$ which must be satisfied if it is chosen in the solution and the capacities are \textit{monotonic} -- for any two balls $B_i$ and $B_j$ if the radius of $B_i$ is at least the radius of $B_j$, then $U_i \ge U_j$. At first glance, this assumption might seem artificial. However, this model has applications in wireless network. In a wireless network, coverage areas of antennas can be modelled using balls. Moreover, it might be economical to invest in capacity of an antenna to serve more clients, if its coverage area is larger. Bandyapadhyay~{et al.}~\cite{bandyapadhyay2019capacitated} gave an $(O(1),9)$ bicriteria approximation for MMCC using the same approach. 

\subsection{Our Results and Techniques}
In this paper, we obtain improved results both for MCC and MMCC. 

\begin{itemize}
    \item For MCC, we obtain an $(O(1),4.24)$ bicriteria approximation, i.e., it is possible to obtain an $O(1)$-approximation with only $4.24$ factor expansion of the balls when the capacities are uniform. 
    
    \item For MMCC, we obtain an $(O(1),5)$ bicriteria approximation, i.e., it is possible to obtain an $O(1)$-approximation with only $5$ factor expansion of the balls when the capacities are monotonic. 
\end{itemize}

Similar to \cite{bandyapadhyay2019capacitated} our results are also based on LP rounding. Indeed, our starting point is their rounding algorithm. For the purpose of giving an overview of our technique, let us focus on MMCC. The algorithm in \cite{bandyapadhyay2019capacitated} consists of three steps -- Preprocessing, Cluster Formation and Selection of Balls. Each of Preprocessing and Selection of Balls incurs an overhead of a factor $3$ expansion of the balls, resulting in the $9$ factor expansion. In our algorithm we judiciously avoid the preprocessing step to save the factor $3$ expansion. At first glance, it is not entirely clear how to do the rounding without preprocessing, as the preprocessed solution has several \say{nice} properties. Nevertheless, we partition the set of points into two subsets and construct two auxilliary LPs. Using the initial fractional LP solution, we construct two feasible fractional solutions to these two LPs. We round these two solutions independently to obtain two integral solutions corresponding to the two subsets of points. For rounding the first LP, we use an algorithm similar to the one in \cite{bandyapadhyay2019capacitated}, but without preprocessing. We show that the constructed fractional LP solution has equally nice properties so that the algorithm in \cite{bandyapadhyay2019capacitated} can be extended in this case. For rounding the second LP, we use a rather simple algorithm. 

The sets of balls involved in two LPs are not necessarily disjoint, and thus a ball can be selected in both of the solutions. But, taking multiple copies of a ball is not allowed. To resolve this issue, we first identify a subset of balls and allow only these balls to be involved in both solutions. Moreover, we scale down the capacities of these balls by a suitable factor. This ensures that even if a ball is selected in both solutions, the total capacity used by the copies does not exceed the original capacity. Note that the scaling of capacities leads to a new issue that the capacities no longer satisfy the monotonicity property in general. However, we show that it is possible to overcome this issue by considering two classes of balls separately -- one whose capacities remain unchanged and the other whose capacities are scaled down. 

\subsection{Related Work}
Considering the hardness of MCC, researchers have studied the Euclidean version of the problem with the goal of obtaining better approximation. The dimension $d$ of the space is assumed to be a constant. One interesting case is when the set $\cb$ contains all possible unit balls, which appeared in the Sloan Digital Sky Survey project \cite{LuptonMY98}. Ghasemi and Razzazi \cite{GhasemiR14} have obtained a PTAS for this case. In the general Euclidean case the best known approximation factor is still $O(\log n)$. Bandyapadhyay~{et al.}~\cite{bandyapadhyay2019capacitated} showed that in this special case of MCC only $1+\epsilon$ expansion of the balls is sufficient to obtain a constant approximation. 

MCC is a special version of \textit{Capacitated Set Cover} (CSC). CSC is similar to set cover except each set $S_i$ has a capacity $U_i$. Moreover, we want to find an assignment of the points to the chosen subfamily of sets such that each element is assigned to a set it is in and at most $U_i$ elements are assigned to each set $S_i$. CSC is a well-studied problem. Wolsey \cite{Wolsey82} designed a greedy algorithm for CSC that achieves a tight $O(\log n)$-approximation. Capacitated vertex cover is another special case of CSC, where each element is contained in exactly two sets. A $3$-approximation for this problem was given by Chuzhoy and Naor \cite{ChuzhoyN06}. The approximation factor was subsequently improved to $2$ by Gandhi~{et al.}~\cite{GandhiHKKS06}. The generalization where each element belongs to at most a bounded number of sets is also well-studied \cite{Kao17,Wong17}. 

The uncapacitated version of MCC (\textit{Metric Uncapacitated Covering} (MUC)), where each set can be assigned with any number of points is another extensively studied problem. Note that the same bicriteria hardness of MCC mentioned above holds even for MUC. But, using a simple LP rounding scheme one can obtain a $(1,3)$ bicriteria approximation for this problem. The MUC problem in the fixed-dimensional Euclidean space also has received huge attention from the researchers. Br$\ddot{\text{o}}$nnimann and Goodrich \cite{BronnimannG1995} have designed an $O(1)$-approximation for this problem in the plane. In a celebrated work, Mustafa and Ray \cite{MustafaR2010} improved this result by obtaining a PTAS for the problem. In dimension more than $2$, the problem is notoriously hard and the best known approximation is $O(\log n)$. Considering this situation Har-Peled and Lee \cite{Har-PeledL12} gave a $(1+\epsilon,1+\epsilon)$ bicriteria approximation.    

Capacitated clustering and facility location problems are another set of interesting and well-studied problems. One such interesting problem is capacitated $k$-center. $O(1)$-approximations are known both for the uniform \cite{Bar-IlanKP93,KhullerS00} and non-uniform \cite{AnBCGMS15,CyganHK12} version of this problem. Another popular clustering problem is capacitated $k$-median for which no $O(1)$-approximation is known so far. Seemingly the existing techniques are not capable of handling the combination of the global constraint on the number of centers and the capacity constraint. Indeed, if either of these constraints is allowed to be violated by an $O(1)$ factor, $O(1)$-approximations are known in those cases \cite{ByrkaRU16,ByrkaFRS15,CharikarGTS02,ChuzhoyR05,DemirciL16,Li15,Li17}. For capacitated facility location $O(1)$-approximations are known based on local search paradigm \cite{AggarwalLBGGGJ13,BansalGG12,ChudakW05,KorupoluPR00,PalTW01} and rounding of LP \cite{AnSS17}.   

\subsection{Paper Outline}
In Section \ref{sec:lp} we describe the natural LP for MMCC and have some definitions, which will be useful throughout the paper. In Section \ref{sec:oldalgo} we give an overview of the algorithm of \cite{bandyapadhyay2019capacitated}. Our LP rounding algorithm for MMCC and the analysis appear in Section \ref{sec:newalgo}. In Section \ref{sec:uniform} we show how to modify our algorithm for MMCC in the uniform case to obtain the improved bound. Finally, in Section \ref{sec:conclude} we conclude with some open problems. 
%For space constraints, the proofs of the lemmas marked by $(*)$ appear in the appendix. 

\section{Preliminaries}\label{sec:lp}
Recall that in MMCC we are given a set of points $P$ and a set of balls $\cb$. The capacity of each ball $B_i\in \cb$ is $U_i$. Also, these capacities satisfy monotonicity, i.e., for any two balls $B_i$ and $B_j$, if $r_i \ge r_j$, $U_i\ge U_j$. 

The relaxation of the natural LP for MMCC is shown in the following. In the LP for MMCC, we have a variable $y_i$ for each ball $B_i\in \cb$ that indicates whether $B_i$ is in the solution ($y_i=1$) or not ($y_i=0$). For each ball $B_i$ and each point $p_j \in P$, there is a variable $x_{ij}$ that indicates whether $p_j$ is assigned to $B_i$ ($x_{ij}=1$) or not ($x_{ij}=0$). Constraint \ref{constr:open} ensures that if a point is assigned to a ball, the ball must be selected in the solution. Constraint \ref{constr:cap} ensures that the total number of points assigned to $B_i$ is at most $U_i$. Constraint \ref{constr:flow} ensures that each point is assigned to exactly one ball. Constraint \ref{constr:coverage} ensures that if a point $p_j$ is assigned to a ball $B_i$, $p_j$ must be contained in $B_i$. The remaining constraints are relaxed in \ref{LP}, which define the domains of the variables. We note that the LP relaxation for MCC is same as \ref{LP} except there all the $U_i$ are equal.  
\begin{align}
\label{LP}
&\text{minimize}&\sum_{B_i \in \cb} y_i \nonumber \tag{MMCC-LP}
\\&\text{s.t.}& x_{ij} &\le y_i & & \forall p_j \in P,\; \forall B_i \in \cb
\label{constr:open}
\\& &\sum_{p_j \in P} x_{ij} & \le y_i \cdot U_i & & \forall B_i \in \cb \label{constr:cap}
\\& &\sum_{B_i \in \cb} x_{ij} &= 1 & & \forall p_j \in P
\label{constr:flow}
\\& &x_{ij} &= 0 &&\text{$\forall p_j \in P,\; \forall B_i \in \cb$ such that $p_j \not\in B_i$} \label{constr:coverage}
\\& &x_{ij} &\ge 0 &&\forall p_j \in P,\; \forall B_i \in \cb \label{constr:fractional_x}
\\& &0\le y_i &\le 1 &&\forall B_i \in \cb \label{constr:fractional_y}
\end{align}

We denote any solution to \ref{LP} by $(x,y)$. To distinguish between two different solutions, we use different annotations with $x$ and $y$. The cost of $(x,y)$ is defined as, $\text{cost}(x,y)=\sum_{B_i \in \cb} y_i$. For an integral solution, the cost is exactly the number of balls in the solution. Consider any solution $(x,y)$ to \ref{LP}. For a ball $B_i$ and a point $p_j$, if $x_{ij} > 0$, we say $B_i$ \textit{serves} $p_j$ and $p_j$ receives $x_{ij}$ amount of \textit{flow} from $B_i$. The \textit{flow out} of $B_i$ is the total amount of flow $\sum_{p_j \in P} x_{ij}$ that $B_i$ gives to all the points. Next, we define an operation that we call \say{reroute}. For a point $p_j$ and two balls $B_i$ and $B_{\ell}$, rerouting of $f$ amount of flow for $p_j$ from $B_i$ to $B_{\ell}$ means we increase $x_{\ell j}$ by $f$ and decrease $x_{ij}$ by $f$. For two balls $B_i$ and $B_{\ell}$, rerouting of flow from $B_i$ to $B_{\ell}$ means for each point $p_j$ served by $B_i$, we reroute $x_{ij}$ amount of flow for $p_j$ from $B_i$ to $B_{\ell}$. Thus, the flow out of $B_i$ becomes $0$ after this operation. For a point $p_j$, a set of balls $S$ and a ball $B_{\ell}\notin S$, rerouting of $f$ amount of flow from the balls in $S$ to $B_{\ell}$ means we increase $x_{\ell j}$ by $f$ and decrease $x_{ij}$ by $f_i\ge 0$ for each $B_i \in S$ such that $\sum_{B_i\in S} f_i=f$.  

\section{Overview of the Algorithm of \cite{bandyapadhyay2019capacitated}}\label{sec:oldalgo}

Our algorithm is based on the algorithm of \cite{bandyapadhyay2019capacitated}. In this section we give an overview of the algorithm of \cite{bandyapadhyay2019capacitated}. 
Let $(x,y)$ be a feasible solution to \ref{LP}. The LP rounding algorithm of \cite{bandyapadhyay2019capacitated} rounds the solution so that $y$ values of all the balls become integral. We note that it is sufficient to obtain such a solution. Indeed, as all the capacities are integral, it is possible to find another solution with the same $y$ values where all the $x$ values are also integral \cite{ChuzhoyN06}. The algorithm has three major steps. The first step is the preprocessing step. Fix a $0 <  \alpha\le 3/8$. A ball $B_i$ is called \emph{heavy} if $y_i =1$ and \emph{light} if $0\le y_i\le \alpha$. Let $\ch$ and $\cl$ be the respective set of heavy and light balls. We note that the sets of heavy and light balls are always defined w.r.t. an LP solution. But, for simplicity we do not explicitly mention that in the notations $\ch$ and $\cl$. The implicit solution w.r.t. which $\ch$ and $\cl$ are defined can be easily derived from the context.  
%For any point $p_j\in P$, define the flow received by $p_j$ from a ball $B_i$, $f(i,j)=x_{ij}$. For a subset of balls ${\cb}_1\subseteq \cb$, define the flow received by $p_j$ from ${\cb}_1$, $f({\cb}_1,j)=\sum_{B_i\in {\cb}_1} f(i,j)$. 
Now, it might not be true that for all $p_j\in P$, the sum of the $y$ values of the balls in $\cl$ that serve $p_j$ is at most $\alpha$. In the preprocessing step, the algorithm of \cite{bandyapadhyay2019capacitated} modifies the computed LP solution to obtain another LP solution such that the above mentioned property is satisfied. In particular, they prove the following lemma. 

\begin{lemma}
\label{lemma:preproc}(Lemma 3.1 of \cite{bandyapadhyay2019capacitated}) Given a feasible LP solution $\sol=(x, y)$, and a parameter $0 < \alpha \le
\frac{3}{8}$, there exists a polynomial time algorithm to obtain another LP solution
$\solbar=(\overline x, \overline y)$ that satisfies all the constraints of \ref{LP} (Constraints \ref{constr:open}-\ref{constr:fractional_y}), except Constraint \ref{constr:coverage}. Additionally, $\solbar$ satisfies the following properties.
\begin{enumerate}
   \item Any ball $B_i \in \cb$ with non-zero $\overline{y_i}$ is either heavy ($\overline{y_i} = 1$) or light ($0<
      \overline{y_i} \le \alpha$).
   \item For each point $p_j \in P$, we have that 
       \begin{align}
       \label{ineq:cond2}
          \sum_{B_i \in \cl:
         \overline{x}_{ij} > 0} \overline{y_i} \leq  \alpha,
       \end{align}
where $\cl$ is the set of light balls with respect to $\solbar$. \label{Condition 2}
   \item For any heavy ball $B_i$, and any point $p_j \in P$ served by $B_i$,
      $d(c_i, p_j) \le 3r_i$.
   \item For any light ball $B_i$, and any point $p_j \in P$ served by $B_i$, $d(c_i, p_j) \le r_i$.
   \item $\cost(\solbar) \le \frac{1}{\alpha} \cost(\sol)$.
\end{enumerate}
\end{lemma}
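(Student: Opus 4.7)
I would construct $\bar{\sigma}$ from $\sigma$ in two phases: a \emph{round-up phase} that eliminates balls with intermediate $y$-values, followed by a \emph{cluster-and-reroute phase} that enforces condition~2.

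In Phase~1, classify every ball with $y_i>0$ as heavy ($y_i=1$), \emph{medium} ($\alpha<y_i<1$), or \emph{light} ($0<y_i\le \alpha$); set $\overline{y_i}\leftarrow 1$ for every medium ball, and otherwise copy $(x,y)$ unchanged. Property~1 holds immediately. Because $1 \le y_i/\alpha$ for every medium ball (and no other $y_i$ changes), the cost is multiplied by at most $1/\alpha$. Properties~3 and~4 hold directly from the original LP coverage constraint, since every served point still lies in its serving ball---even the newly-heavy balls satisfy the stronger $d(c_i,p_j)\le r_i\le 3r_i$.

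In Phase~2, I iterate while there exists a point $p_j$ with $\sum_{B_i\in\cl,\ \overline{x}_{ij}>0}\overline{y_i}>\alpha$ (relative to the current solution). For such a $p_j$, pick a light ball $B_{i^\star}$ of \emph{maximum} radius currently serving $p_j$, promote it by $\overline{y_{i^\star}}\leftarrow 1$, and reroute all of $p_j$'s flow from the other light balls to $B_{i^\star}$. Each other light ball $B_k$ then has $\overline{x}_{kj}=0$, so condition~2 becomes satisfied for $p_j$. Property~3 holds for $B_{i^\star}$ since $d(c_{i^\star},p_j)\le r_{i^\star}\le 3r_{i^\star}$ (because $B_{i^\star}$ served $p_j$ in the LP) and any other previously-served point of $B_{i^\star}$ is within $r_{i^\star}$ of $c_{i^\star}$. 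Properties~1 and~4 are not disturbed, and no previously satisfied point becomes newly violated because a light-to-heavy change can only remove terms from other points' condition-2 sums.

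The main obstacle is the cost bound $\cost(\overline{\sigma})\le \cost(\sigma)/\alpha$ at the end of Phase~2. My plan is a charging argument: pair each promotion of $B_{i^\star}$ with the cancellation ($\overline{y_k}\leftarrow 0$) of the other light balls contributing to $p_j$'s violating sum, whose total $y$-mass exceeds $\alpha$; then each unit of cost paid by a new heavy ball is charged against at least $\alpha$ units of original light $y$-mass, so Phase~2 stays within the $1/\alpha$ budget from Phase~1. The cancellation forces a redistribution of the residual flow each canceled $B_k$ carried on points $p_m\ne p_j$. Because $B_{i^\star}$ was chosen to have maximum radius in the cluster we have $r_k\le r_{i^\star}$, and the triangle inequality through $p_j$ gives
\[
d(c_{i^\star},p_m)\le d(c_{i^\star},p_j)+d(p_j,c_k)+d(c_k,p_m)\le r_{i^\star}+2r_k\le 3r_{i^\star},
\]
exactly matching the expansion permitted by property~3. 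The remaining technical hurdle is capacity: the freed capacity $(1-y_{i^\star})U_{i^\star}\ge(1-\alpha)U_{i^\star}$ of the promoted ball must absorb all of the residual canceled flow, and the threshold $\alpha\le 3/8$ is precisely what provides enough slack for this redistribution to be simultaneously feasible with property~3 for every newly heavy ball---this coupling between capacity bookkeeping and the $3r_i$ distance bound is, I expect, the delicate core of the argument.
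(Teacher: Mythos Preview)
This lemma is not proved in the present paper; it is quoted verbatim as Lemma~3.1 of \cite{bandyapadhyay2019capacitated} and used as a black box, so there is no in-paper proof to compare against. I can still assess your plan on its own merits.

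Phase~1 and the cost accounting are fine. Your charging argument in Phase~2 actually does give the exact bound $\cost(\solbar)\le\cost(\sol)/\alpha$: writing $P$ and $C$ for the total promoted and cancelled light $y$-mass, each iteration consumes more than~$\alpha$ of $P+C$, so the number $k$ of promotions satisfies $k\le (P+C)/\alpha$; hence the final cost $|\ch_1|+k+(L-P-C)\le |\ch_1|+L/\alpha\le \cost(\sol)/\alpha$. The distance bound $d(c_{i^\star},p_m)\le r_{i^\star}+2r_k\le 3r_{i^\star}$ is also correct.

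The genuine gap is the capacity step, and it is not just ``delicate''---as written it fails. You cancel \emph{all} light balls serving the violating point $p_j$, and reroute their entire flow to $B_{i^\star}$. The rerouted load is at most $U_{i^\star}\sum_{k}\overline{y}_k$ (using monotonicity), but nothing in the LP bounds $\sum_{B_k\in\cl:\,\overline{x}_{kj}>0}\overline{y}_k$: a point can be served by arbitrarily many light balls whose $y$-values sum to any constant, so this quantity can exceed $1$ by any factor and overwhelm the freed capacity $(1-\alpha)U_{i^\star}$. Invoking $\alpha\le 3/8$ cannot help here, because the offending sum is not a function of~$\alpha$. Even the milder step of rerouting only $p_j$'s own light flow to $B_{i^\star}$ can already break capacity when $U_{i^\star}$ is small: if $U_{i^\star}=1$ the freed capacity is $1-\alpha$, yet $p_j$'s light flow can be as large as~$1$.

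A workable repair is to cancel only a \emph{subset} of the other light balls with $y$-mass in $[\alpha,2\alpha]$ (always extractable since each $\overline{y}_k\le\alpha$); then the absorbed flow from cancellation is at most $2\alpha U_{i^\star}$, and you must separately show that $p_j$'s rerouted flow fits. This second piece still needs care, and you will likely need to route $p_j$'s residual flow through the heavy balls already serving it rather than dump all of it on $B_{i^\star}$---which is essentially where the $3/8$ threshold enters in the original argument.
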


Note that a point $p_j$ can be fractionally assigned by the algorithm in Lemma \ref{lemma:preproc} to a heavy ball $B_i$ even if $p_i \notin B_i$, but, in this case $d(c_i, p_j)$ must be at most $3r_i$. Hence, a factor 3 expansion of the ball is sufficient for it to serve the point. In summary, the preprocessing step implicitly incurs an expansion factor of 3 for the heavy balls with respect to the new LP solution $\solbar$. We also note that the preprocessing algorithm uses the fact that the capacities are monotonic. 

The second step of the algorithm is the key step and is called Cluster Formation. In the following, we give an overview of this step. 
%A cluster is basically a subset of balls. 
The algorithm maintains an LP solution $\solbar=(\overline x, \overline y)$ which is initially the output of the preprocessing step. This solution is essentially altered throughout the step and when the step finishes $\overline y_i \in \{0,1\}$ for all $B_i \in \cb$. Each heavy ball $B_i$ forms a cluster which initially consists of itself  ($\{B_i\}$). For any light ball $B_t$, either $B_t$ is opened fully in the solution or it joins a cluster of a heavy ball by rerouting its flow to the heavy ball. The algorithm runs for several iterations until the fate of all these light balls are decided. 

In each iteration, every heavy ball uses its available capacity to reroute the flow of as many intersecting light balls as possible to itself. Each such light ball joins the cluster of the heavy ball. From the remaining light balls whose fate are not yet decided, a ball is selected greedily to be included in the solution. Also, for points inside the selected ball, an appropriate amount of flow is rerouted from other balls to this ball to utilize its capacity. We skip the details of this flow rerouting in this overview. This completes the overview of the step. 

Note that the flow rerouting from heavy balls to a light ball when the light ball is opened fully, is an essential component of the analysis for obtaining the constant factor guarantee on the size of the solution. Consider a light ball $B_t$ which is selected for opening fully and assume that it serves $k_t\le U_t$ points. Then, we can set the $\overline x_{tj}$ value for each of these $k_t$ points to 1, i.e., we fully assign $p_j$ to $B_t$. Note that preprocessing ensures that $\sum_{B_i \in \cl:
\overline{x}_{ij} > 0} \overline{y_i} \leq  \alpha$ or $\sum_{B_i \in \ch:
\overline{x}_{ij} > 0} \overline{y_i} \geq  1-\alpha$. Thus, when these points are fully assigned to $B_t$, at least $(1-\alpha)k_t$ amount of flow is rerouted from the heavy balls to $B_t$ which they can now use to reroute flow from other light balls. This argument is essential in the analysis. 
%The details of this step are much more complicated. 
Now, we have an observation which follows due to the way light balls are added to a cluster. 

%\begin{observation}
% Consider a cluster of a newly-heavy ball $B_{\ell}$. Then, when the Cluster Formation finishes,
% \begin{enumerate}
%     \item For each point $p_j \in P$, if $\overline x_{ij}>0$, then $p_j\in B_{\ell}$. 
%     \item $\sum_{j\in P} \overline x_{lj} \le U_{\ell}$. 
% \end{enumerate}
%\end{observation}
\begin{observation}\label{obs:heavyclusters}
 Consider a cluster of a heavy ball $B_h$ that contains the light balls $B_1,\ldots, B_{\ell}$. Then, when the Cluster Formation finishes,
 \begin{enumerate}
     \item For each $1\le i\le {\ell}$, there is a point $p_j$ such that $p_j \in B_h\cap B_i$.
     \item $\sum_{i=1}^{\ell}\sum_{j\in P} \overline x_{ij} \le U_h-\sum_{j\in P} \overline x_{hj}$, i.e., the total amount of flow out of the balls in the cluster of $B_h$ is at most $U_h$. 
 \end{enumerate}
\end{observation}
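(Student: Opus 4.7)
For part 1 of the observation, the plan is to inspect the criterion that Cluster Formation uses to declare $B_h$ to have ``rerouted the flow of $B_i$''. The overview says each iteration processes ``intersecting light balls'', which I read to mean light balls $B_i$ with $B_h \cap B_i \neq \emptyset$ as subsets of $P$. Consequently, any light ball added to the cluster of $B_h$ comes with a witness point $p_j \in B_h \cap B_i$ that the algorithm uses to legitimise the rerouting of $B_i$'s flow into $B_h$. Part 1 is thus essentially a restatement of the selection rule, and my proof will just record this witness.

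For part 2, I plan to maintain the invariant that at every moment during Cluster Formation, the total flow leaving the balls currently in the cluster of $B_h$ is at most $U_h$. The base case is immediate: the cluster initially contains only $B_h$, and Constraint~\ref{constr:cap} applied to the heavy ball $B_h$ (with $\overline y_h = 1$) gives $\sum_{p_j \in P} \overline x_{hj} \le U_h$. When a new light ball $B_i$ joins the cluster, the algorithm reroutes $B_i$'s entire outflow to $B_h$, which simply redistributes flow inside the cluster and so preserves its total outflow, while Constraint~\ref{constr:cap} applied to $B_h$ still caps this total by $U_h$. The only subsequent operations that touch a cluster ball are reroutings \emph{away} from the cluster, triggered when a light ball outside the cluster is opened fully; these only decrease the cluster's total outflow. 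Reading off the invariant at the end of the step and rearranging gives exactly $\sum_{i=1}^{\ell}\sum_{j\in P} \overline x_{ij} \le U_h - \sum_{j \in P} \overline x_{hj}$.

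The main obstacle I foresee is making precise that $B_i$'s outflow is rerouted \emph{in full} the moment $B_i$ enters the cluster, rather than only partially. If partial rerouting were allowed, $B_h$ could end a given iteration at capacity while $B_i$ still carried residual outflow, breaking the invariant by exactly that residual. The natural reading of ``reroute the flow of as many intersecting light balls as possible'' is all-or-nothing per light ball, with $B_h$ simply skipping any candidate whose outflow exceeds $B_h$'s remaining available capacity. Verifying this against the precise algorithm of \cite{bandyapadhyay2019capacitated}, and ruling out the pathological partial-rerouting case, is the step on which the invariant argument ultimately hinges; once it is in place, parts 1 and 2 both drop out.
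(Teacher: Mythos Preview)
Your proposal is correct and matches the paper's approach: the paper does not give a formal proof of this observation at all, stating only that it ``follows due to the way light balls are added to a cluster,'' and your unpacking---the intersection criterion for part~1, and the all-or-nothing capacity check before absorption for part~2---is exactly the intended justification. In particular, your concern about partial rerouting is resolved by the algorithm as described in Section~\ref{sec:newalgo} (and in \cite{bandyapadhyay2019capacitated}): a light ball $B_{i'}$ joins $\text{cluster}(B_h)$ only when $AC(B_h)$ is at least the full flow out of $B_{i'}$, so the absorption is indeed all-or-nothing and the invariant goes through.
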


The third step is called Selection of Balls. In this step, from each cluster a ball is carefully selected and expanded so that it can serve all the points served by the balls in the cluster. For a cluster of a heavy ball $B_h$, if it is the largest ball in the cluster then $B_h$ is selected and with three factor expansion it can serve all the points served by the cluster. As during preprocessing the heavy ball might have been expanded by a factor of 3, its total expansion factor is 9. If $B_h$ is not the largest ball, the largest ball $B_{\ell}$ is a light ball of the cluster. Then, we select this light ball and expand by a factor of 5 so that it can serve all the points served by the cluster. The light ball can serve the total flow assigned to the cluster, as $U_{\ell} \ge U_h$ due to monotonicity. This is another place where the monotonicity assumption on the capacities is necessary. 

The following lemma that states the guarantee achieved by the above algorithm follows due to the analysis of \cite{bandyapadhyay2019capacitated}. 

\begin{lemma}\label{lemma:cost}
There is a $(6+5\alpha)/\alpha$-approximation for MMCC that expands the balls by at most a factor of $9$. 
\end{lemma}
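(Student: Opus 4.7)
The plan is to run the three steps of the algorithm---Preprocessing, Cluster Formation, and Selection of Balls---in sequence and to chain together a cost bound and an expansion bound. Starting from an optimal fractional solution $\sol$ to \ref{LP}, I would first invoke Lemma~\ref{lemma:preproc} to obtain a modified LP solution $\solbar$ with $\cost(\solbar)\le \cost(\sol)/\alpha$, in which every ball in the support of $\overline{y}$ is either heavy ($\overline{y_i}=1$) or light ($\overline{y_i}\le \alpha$), and the pointwise light-mass bound~\eqref{ineq:cond2} holds. Next I would run Cluster Formation on $\solbar$: it rounds $\overline{y}$ to integers so that the opened balls are exactly the heavy balls $\ch$ together with a set $\mathcal{O}$ of fully opened light balls. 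Finally I would run Selection of Balls, picking one representative per cluster and expanding it enough to cover every point served by the balls in that cluster.

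For the cost bound, the output would contain $|\ch|+|\mathcal{O}|$ balls. Since $|\ch|\le \cost(\solbar)$ is already paid for by the LP, the heart of the argument would be a charging scheme for $|\mathcal{O}|$. When a light ball $B_t$ is opened and fully assigns $k_t$ points, the bound~\eqref{ineq:cond2} forces at least $(1-\alpha)k_t$ units of those points' flow to have come from heavy balls originally; rerouting that flow back to the heavy balls frees $(1-\alpha)k_t$ units of heavy-ball capacity, which the heavy balls reuse in the next iteration to absorb further light balls into their clusters. Combining this freed-capacity accounting with Observation~\ref{obs:heavyclusters}(2) (the total flow in a heavy cluster fits into $U_h$) and the greedy rule for picking which light ball to open next should yield a bound of the form $|\ch|+|\mathcal{O}|\le (6+5\alpha)\,\cost(\solbar)$. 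Chaining with the preprocessing blow-up would give the claimed approximation ratio $(6+5\alpha)/\alpha$ with respect to $\cost(\sol)$; the final conversion from fractional $\overline{x}$ to integral assignments keeping the same $\overline{y}$ is standard and loses no further factor~\cite{ChuzhoyN06}.

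For the expansion bound, I would do a per-cluster case analysis. In a cluster rooted at a heavy ball $B_h$ that is itself the largest member, $B_h$ is selected; preprocessing already permits $B_h$ to serve its own points within radius $3r_h$, and covering the additional points of its light cluster-mates would cost a further factor-$3$ expansion, so the multiplicative total from the original radius is $9$. If the largest member of the cluster is instead a light ball $B_\ell$, monotonicity gives $U_\ell\ge U_h$ so the capacity suffices, and a factor-$5$ expansion of $B_\ell$ reaches every cluster point geometrically. A fully opened light ball $B_t\in\mathcal{O}$ serves only points within its original radius and needs no expansion. Taking the worst case, the expansion factor is at most $9$.

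The main obstacle would be establishing the charging bound $|\mathcal{O}|\le O(\cost(\solbar))$ with the explicit constant $6+5\alpha$: because Cluster Formation alternates between heavy-ball absorption rounds and single greedy openings, one must amortize the $(1-\alpha)k_t$ capacity freed by each opening against the residual LP light-ball mass $\sum_{B_i\in\cl}\overline{y_i}$ carefully enough to avoid double-counting across iterations. The remaining pieces---the preprocessing cost inflation, the per-cluster selection radius computation, and the final integrality conversion---follow directly from Lemma~\ref{lemma:preproc} and Observation~\ref{obs:heavyclusters}.
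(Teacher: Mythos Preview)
The paper does not actually prove this lemma: it simply states that it ``follows due to the analysis of \cite{bandyapadhyay2019capacitated}.'' So there is no in-paper proof to compare against; your sketch is already more detailed than what the paper provides, and it correctly traces the three-step structure (Preprocessing $\to$ Cluster Formation $\to$ Selection of Balls) from which the lemma is derived in the cited work.

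Your expansion-bound case analysis matches the paper's description (factor~$3$ from preprocessing composed with factor~$3$ from selection in the heavy-largest case, factor~$5$ in the light-largest case), and your identification of the cost bound on $|\mathcal{O}|$ as the crux is exactly right. The one place your sketch is genuinely incomplete is the derivation of the specific constant $6+5\alpha$: the freed-capacity intuition you describe is correct, but turning it into a precise bound requires the $y$-credit / $y$-accumulation amortization machinery (analogous to Lemmas~\ref{lem:ycreditlb}--\ref{lem:boundonO} in Section~\ref{sec:newalgo} of the present paper, which adapt that argument to the modified setting). Without that bookkeeping one cannot pin down the constant, and your phrase ``should yield a bound of the form'' reflects this gap. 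Since the paper itself defers the whole argument to \cite{bandyapadhyay2019capacitated}, your outline is adequate as a proof \emph{sketch} and takes the same route; for a self-contained proof you would need to reproduce the credit-accumulation lemmas with the original parameters.
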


%---------------------------------------------------------------------------------------------------------------------------

\section{The Modified Algorithm for MMCC} \label{sec:newalgo}
In this section, we describe our algorithm. Note that among the 9 factor expansion needed in the algorithm of \cite{bandyapadhyay2019capacitated} 3 factor is contributed by the preprocessing step. Our algorithm avoids this preprocessing step to save this factor 3 expansion. 

Fix $0< \alpha\le 1/60$. We first compute a fractional LP solution $\sol^*=(x^*,y^*)$ to \ref{LP}. Set $y_i=1$ if $y_i^* > \alpha$, otherwise $y_i=y_i^*$. Also, set $x=x^*$. Note that $\sol=(x,y)$ is a feasible solution to \ref{LP} such that $\text{cost}(\sol)\le \text{cost}(\sol^*)/\alpha$. We define the sets $\ch$ and $\cl$ of heavy and light balls w.r.t. $\sol$ in the same way, i.e., $\ch=\{B_i \mid y_i=1\}$ and $\cl=\{B_i \mid  0 < y_i \le \alpha\}$. Note that in $\sol$, any ball that gives some flow to a point is either a heavy or a light ball. We take one copy of the set of heavy balls and two copies of the set of light balls. Let these sets be $\ch_1$, $\cl_1$ and $\cl_2$, respectively. 

Next, we partition the point set into two subsets. Let $P_1$ be the subset of points such that $p_j \in P_1$ if $\sum_{B_i \in \cl}
{x}_{ij} \leq  4\alpha$, i.e., $p_j$ gets a flow of at most $4\alpha$ from the balls in $\cl$. Let $P_2=P\setminus P_1$. Based on these sets $P_1,P_2$, we are going to construct two LP solutions to two auxilliary LPs and round them independently. Finally, we combine these two solutions to find a solution to \ref{LP} where for each $B_i\in \cb$, $y_i\in \{0,1\}$. Intuitively, we satisfy the demands of these two sets of points independently. The light balls are involved in both of the solutions and they might get opened fully in both of the solutions. However, we are not allowed to open multiple copies of a ball. To avoid this situation we reduce the capacity of the light balls by appropriate factor in the auxilliary LP. 

Let the new capacity $U_i'=U_i/10$ for each light ball $B_i$. The new capacity of each heavy ball $B_i$ remains same as before, i.e., $U_i'=U_i$. At this point the reader might wonder about the value of the scaling factor. We note that it is carefully chosen through back calculation to ensure that the analysis goes through. 
%For each point $p_j$ in $P_1$, define the demand $d_j=1$. For each point $p_j$ in $P_2$, define the demand $d_j=\sum_{B_i \in \ch} {x}_{ij}$. 
The first auxilliary LP that we consider is as follows. 

\begin{align}
\label{LP1}
&\text{minimize}&\sum_{B_i \in \cl_1 \cup \ch_1} y_i \nonumber \tag{AUX-LP1}
\\&\text{s.t.}& x_{ij} &\le y_i & & \forall p_j \in P_1,\; \forall B_i \in \cl_1\cup\ch_1
\label{constr:open1}
\\& &\sum_{p_j \in P_1} x_{ij} & \le y_i \cdot U_i' & & \forall B_i \in \cl_1\cup\ch_1 \label{constr:cap1}
\\& &\sum_{B_i \in \cl_1\cup\ch_1} x_{ij} &= 1 & & \forall p_j \in P_1
\label{constr:flow1}
\\& &x_{ij} &= 0 &&\text{$\forall p_j \in P_1,\; \forall B_i \in \cl_1\cup\ch_1$ such that $p_j \not\in B_i$} \label{constr:coverage1}
\\& &x_{ij} &\ge 0 &&\forall p_j \in P_1,\; \forall B_i \in \cl_1\cup\ch_1 \label{constr:fractional_x1}
\\& &0\le y_i &\le 1 &&\forall B_i \in \cl_1\cup\ch_1 \label{constr:fractional_y1}
\end{align}

Note that the above LP has a variable $y_i$ for each ball $B_i$ in $\cl_1\cup\ch_1$, and a variable $x_{ij}$ for each ball $B_i$ in $\cl_1\cup\ch_1$ and each point $p_j\in P_1$. We are not going to solve this LP. Instead, we construct a solution to this LP using $\sol$ and round it using an algorithm similar to the one in \cite{bandyapadhyay2019capacitated}. This LP is used to compare the cost of the rounded solution and the cost of $\sol^*$ in the end. 

We construct an LP solution $\solbar=(\overline x,\overline y)$ from $\sol$ in the following manner. For $B_i \in \ch_1$, $\overline y_i=y_i$. For $B_i\in \cl_1$, $\overline y_i=10\cdot y_i\le 10\alpha < 1$ ($\alpha \le 1/60$). For $p_j\in P_1$, $B_i\in \cl_1\cup\ch_1$, $\overline x_{ij}=x_{ij}$. 
%For $p_j\in P_2$, $B_i\in \ch_1$, $\overline x_{ij}=x_{ij}$. For $p_j\in P_2$, $B_i\in \cl_1$, $\overline x_{ij}=0$. 

\begin{lemma}
$\solbar=(\overline x,\overline y)$ is a feasible solution to \ref{LP1} with cost at most $ \emph{cost}(\sol^*)/\alpha$. 
\end{lemma}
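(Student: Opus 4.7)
The plan is to verify the constraints of \ref{LP1} one-by-one and then to bound the cost. Each feasibility check reduces to the corresponding property of $\sol$ (inherited from $\sol^*$), combined with the particular scalings that define $\solbar$. Only the cost calculation is quantitatively delicate, and it is there that the hypothesis $\alpha\le 1/60$ enters the picture.

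For the domain constraints \ref{constr:fractional_x1}-\ref{constr:fractional_y1} I would just note that $\overline y_i=1$ on $\ch_1$ and $\overline y_i=10\,y_i\le 10\alpha\le 1/6$ on $\cl_1$, while $\overline x_{ij}=x_{ij}\ge 0$. Constraint \ref{constr:coverage1} is inherited verbatim from $\sol^*$ since $\overline x_{ij}=x_{ij}=x^*_{ij}$ and $P_1\subseteq P$. Constraint \ref{constr:open1}, $\overline x_{ij}\le\overline y_i$, follows from $\overline y_i\ge y_i\ge x_{ij}=\overline x_{ij}$ in both the heavy and the light case. The capacity constraint \ref{constr:cap1} is where the two scalings $\overline y_i=10\,y_i$ and $U_i'=U_i/10$ are tuned to cancel: for heavy balls $\overline y_i\,U_i'=U_i$, and for light balls $\overline y_i\,U_i'=(10y_i)(U_i/10)=y_i U_i$, so in either case this already dominates $\sum_{p_j\in P_1}\overline x_{ij}\le\sum_{p_j\in P} x_{ij}$ by the capacity constraint for $\sol$. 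Finally, \ref{constr:flow1} follows because every ball giving nonzero flow to any point in $\sol$ is heavy or light, so $\sum_{B_i\in\cl_1\cup\ch_1}\overline x_{ij}=\sum_{B_i\in\cb}x_{ij}=1$.

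For the cost bound I would write
\[
\text{cost}(\solbar)=\sum_{B_i\in\ch} 1 + 10\sum_{B_i\in\cl} y_i = \sum_{B_i\in\ch} 1 + 10\sum_{B_i\in\cl} y_i^*
\]
(using $y_i=y_i^*$ on light balls) and compare termwise to $y_i^*/\alpha$: heavy balls satisfy $y_i^*>\alpha$, giving $1<y_i^*/\alpha$, while for light balls the inequality $10\,y_i^*\le y_i^*/\alpha$ is exactly the condition $\alpha\le 1/10$, which is implied by $\alpha\le 1/60$. Since every ball with $y_i^*>0$ lies in $\ch\cup\cl$, these termwise bounds add to $\text{cost}(\solbar)\le\text{cost}(\sol^*)/\alpha$.

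The only delicate point is the tuning of the scaling factor $10$: together with $\alpha\le 1/10$, it is exactly what is needed to match the $1/\alpha$ cost factor already lost in passing from $\sol^*$ to $\sol$. A larger scaling would break this cost bound, and, as the authors hint, a smaller scaling would later fail in the downstream argument that the two independently rounded subsolutions can be combined without over-opening any ball; the harsher hypothesis $\alpha\le 1/60$ therefore really comes from that later analysis rather than from the present lemma, which only requires $\alpha\le 1/10$.
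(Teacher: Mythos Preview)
Your proof is correct and follows essentially the same route as the paper: constraint-by-constraint verification using the feasibility of $\sol$, with the capacity check hinging on the cancellation $(10y_i)(U_i/10)=y_iU_i$ for light balls, and the cost bound obtained by comparing heavy and light contributions termwise against $y_i^*/\alpha$. Your additional remark that this particular lemma only needs $\alpha\le 1/10$ (rather than the global $\alpha\le 1/60$) is accurate and a nice clarification beyond what the paper states.
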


\begin{proof}
First note that, $$\text{cost}(\solbar)= \sum_{B_i\in \ch_1} y_i+10\sum_{B_i\in \cl_1} y_i\le (1/\alpha)\sum_{B_i\in \ch_1} y_i^*+10\sum_{B_i\in \cl_1} y_i^*\le \text{cost}(\sol^*)/\alpha.$$ For $p_j\in P_1$, $B_i\in \cl_1\cup\ch_1$, $\overline x_{ij}=x_{ij}\le y_i\le \overline y_i$. 
%For $p_j\in P_2$, $B_i\in \ch_1$, $\overline x_{ij}=x_{ij}\le y_i= \overline y_i$. For $p_j\in P_2$, $B_i\in \cl_1$, $\overline x_{ij}=0\le \overline y_i$. 
Thus, Constraint \ref{constr:open1} is satisfied.  

For $B_i \in \ch_1$, $\sum_{p_j \in P_1} \overline x_{ij}=\sum_{p_j \in P_1} x_{ij}\le y_i \cdot U_i=\overline y_i \cdot U_i'$. For $B_i\in \cl_1$, $\sum_{p_j \in P_1} \overline x_{ij}= \sum_{p_j \in P_1} x_{ij}\le y_i \cdot U_i=(10\cdot y_i) \cdot (U_i/10)=\overline y_i \cdot U_i'$. Thus, Constraint \ref{constr:cap1} is satisfied.

For $p_j\in P_1$, $\sum_{B_i \in \cl_1\cup\ch_1} \overline x_{ij}=\sum_{B_i \in \cl_1\cup\ch_1} x_{ij}=1$. 
%For $p_j\in P_2$, $\sum_{B_i \in \cl_1\cup\ch_1} \overline x_{ij}=\sum_{B_i \in \ch_1} x_{ij}=d_j$. 
Thus, Constraint \ref{constr:flow1} is satisfied. Also, it is trivial to verify that Constraints \ref{constr:coverage1}-\ref{constr:fractional_y1} are also satisfied. Hence, the lemma follows. 
\end{proof}

Next, we describe our second auxilliary LP. Let us again consider the solution $\sol=(x,y)$ to \ref{LP} and the set of light balls $\cl$ w.r.t. $\sol$. Also, consider the second copy  $\cl_2$ of the set of light balls. For each point $p_j$ in $P_2$, define the demand $d_j=\sum_{B_i \in \cl_2} {x}_{ij}$.  
%Note that the demand of a point $p_j\in P_2$ is $d_j=\sum_{B_i \in \ch} {x}_{ij}$. Define the residual demand, $d_j=1-d_j$.  

\begin{align}
\label{LP2}
&\text{minimize}&\sum_{B_i \in \cl_2} y_i \nonumber \tag{AUX-LP2}
\\&\text{s.t.}& x_{ij} &\le y_i & & \forall p_j \in P_2,\; \forall B_i \in \cl_2
\label{constr:open2}
\\& &\sum_{p_j \in P_2} x_{ij} & \le y_i \cdot U_i' & & \forall B_i \in \cl_2 \label{constr:cap2}
\\& &\sum_{B_i \in \cl_2} x_{ij} &\ge d_j & & \forall p_j \in P_2
\label{constr:flow2}
\\& &x_{ij} &= 0 &&\text{$\forall p_j \in P_2,\; \forall B_i \in \cl_2$ such that $p_j \not\in B_i$} \label{constr:coverage2}
\\& &x_{ij} &\ge 0 &&\forall p_j \in P_2,\; \forall B_i \in \cl_2 \label{constr:fractional_x2}
\\& &0\le y_i &\le 1 &&\forall B_i \in \cl_2 \label{constr:fractional_y2}
\end{align}

Note that the above LP has a variable $y_i$ for each ball $B_i$ in $\cl_2$ and a variable $x_{ij}$ for each ball $B_i$ in $\cl_2$ and each point $p_j\in P_2$. Again we are not going to solve this LP. Instead, we construct a solution to this LP using $\sol$ and round it. This LP is used to compare the cost of the rounded solution and the cost of $\sol^*$ in the end. 

We construct an LP solution $\hat{\sol}=(\hat{x},\hat{y})$ from $\sol$ in the following manner. For $B_i\in \cl_2$, $\hat{y}_i=10\cdot y_i\le 10\alpha < 1$. For $p_j\in P_2$, $B_i\in \cl_2$, $\hat{x}_{ij}=x_{ij}$. 

\begin{lemma}
$\hat{\sol}=(\hat{x},\hat{y})$ is a feasible solution to \ref{LP2} with cost at most $10\cdot \emph{cost}(\sol^*)$. 
\end{lemma}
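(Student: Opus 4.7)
The plan is to verify each constraint of \ref{LP2} in turn, mirroring the previous lemma, and then to compute the cost. Each verification is a direct calculation using the definition of $\hat{\sol}$ and the feasibility of $\sol$ for \ref{LP}; no nontrivial structural argument is needed. The only place one must pay a little attention is Constraint \ref{constr:flow2}, since unlike \ref{LP1} this one is an inequality with a right-hand side that depends on $\sol$, and Constraint \ref{constr:cap2}, where the capacity scaling by $1/10$ must cancel against the $y$-scaling by $10$.

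Concretely, first I would bound the cost. Since $\cl_2$ only contains light balls and light balls have $y_i = y_i^* \le \alpha$, the total cost is
\[
\text{cost}(\hat{\sol}) \;=\; \sum_{B_i\in\cl_2}\hat{y}_i \;=\; 10\sum_{B_i\in\cl}y_i \;=\; 10\sum_{B_i\in\cl}y_i^* \;\le\; 10\cdot\text{cost}(\sol^*).
\]
Next I would check Constraint \ref{constr:open2}: $\hat{x}_{ij}=x_{ij}\le y_i\le 10y_i=\hat{y}_i$, using the analogous constraint for $\sol$. Then Constraint \ref{constr:cap2}: for $B_i\in\cl_2$,
\[
\sum_{p_j\in P_2}\hat{x}_{ij}\;=\;\sum_{p_j\in P_2}x_{ij}\;\le\;\sum_{p_j\in P}x_{ij}\;\le\; y_i\cdot U_i\;=\;(10y_i)\cdot(U_i/10)\;=\;\hat{y}_i\cdot U_i',
\]
so the scaling of capacity and of $y$ cancel perfectly, which is exactly the point of defining $U_i' = U_i/10$.

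For Constraint \ref{constr:flow2} I would use the definition of $d_j$: for $p_j\in P_2$,
\[
\sum_{B_i\in\cl_2}\hat{x}_{ij}\;=\;\sum_{B_i\in\cl_2}x_{ij}\;=\;\sum_{B_i\in\cl}x_{ij}\;=\;d_j,
\]
so the constraint $\ge d_j$ holds with equality. Constraint \ref{constr:coverage2} is immediate because $\hat{x}_{ij}=x_{ij}=0$ whenever $p_j\notin B_i$, inherited from the feasibility of $\sol$. Nonnegativity of $\hat{x}$ is inherited from $x$, and $0\le\hat{y}_i=10y_i\le 10\alpha\le 1/6<1$ using $\alpha\le 1/60$, so Constraints \ref{constr:fractional_x2} and \ref{constr:fractional_y2} hold.

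There is no real obstacle here; the lemma is essentially a bookkeeping check verifying that the scaling factors have been chosen consistently. The only thing to keep straight is that the comparison for the cost is against $\sol^*$ (not $\sol$), which is why no $1/\alpha$ factor appears: the light balls have $y_i=y_i^*$ by construction, so we lose only the factor $10$ from the explicit scaling and nothing from the $y_i>\alpha\mapsto 1$ rounding (that rounding affects only heavy balls, which are absent from \ref{LP2}).
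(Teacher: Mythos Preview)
Your proof is correct and follows essentially the same approach as the paper: a direct constraint-by-constraint verification using the definition of $\hat{\sol}$ and the feasibility of $\sol$, together with the capacity/$y$-scaling cancellation for Constraint~\ref{constr:cap2} and the definition of $d_j$ for Constraint~\ref{constr:flow2}. Your write-up is in fact slightly more explicit than the paper's (e.g., the intermediate step $\sum_{p_j\in P_2}x_{ij}\le\sum_{p_j\in P}x_{ij}$ and the remark on why no $1/\alpha$ factor appears in the cost bound).
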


\begin{proof}
First note that $\text{cost}(\hat{\sol})\le 10\sum_{B_i\in \cl_2} y_i=10\sum_{B_i\in \cl_2} y_i^*\le 10\cdot \text{cost}(\sol^*)$. For $p_j\in P_2$, $B_i\in \cl_2$, $\hat{x}_{ij}=x_{ij}\le y_i < \hat{y}_i$. Thus, Constraint \ref{constr:open2} is satisfied.  

For $B_i\in \cl_2$, $\sum_{p_j \in P_2} \hat{x}_{ij}= \sum_{p_j \in P_2} x_{ij}\le y_i \cdot U_i=(10\cdot y_i) \cdot (U_i/10)=\hat{y}_i \cdot U_i'$. Thus, Constraint \ref{constr:cap2} is satisfied.

For $p_j\in P_2$, $\sum_{B_i \in \cl_2} \hat{x}_{ij}=\sum_{B_i \in \cl_2} x_{ij}=d_j$. Thus, Constraint \ref{constr:flow2} is satisfied. Also, it is trivial to verify that Constraints \ref{constr:coverage2}-\ref{constr:fractional_y2} are also satisfied. Hence, the lemma follows. 
\end{proof}

In the following, we give two algorithms for rounding the two auxilliary LPs. The rounded solution of the first LP satisfies all the constraints except the coverage constraint. The rounded solution of the second LP satisfies all the constraints except the coverage and capacity constraints. Then, we merge these two solutions to obtain a solution for \ref{LP} that does not violate any capacity constraints. 

\subsection{Rounding the First Auxilliary LP}
Note that we are given a feasible LP solution $\solbar=(\overline{x},\overline{y})$ to \ref{LP1} that has the following properties. 
\begin{enumerate}
    \item For any $B_i\in \ch_1$, $\overline{y}_i=1$.
    \item For any $B_i\in \cl_1$, $\overline{y}_i\le 10\alpha$.
    \item For any $p_j\in P_1$, $\sum_{B_i \in \ch_1} \overline {x}_{ij} \ge 1-4\alpha$. 
%    \item For any $p_j\in P_1$ and $B_i\in \cl_1$, $\hat{x}_{ij}\le \alpha$. 
    \item $\text{cost}(\overline {\sol})\le \text{cost}({\sol}^*)/\alpha$.
\end{enumerate}

Note that Property (3) above states that for any point $p_j\in P_1$, the flow received by $p_j$ from the balls in $\ch_1$ is at least $1-4\alpha$. We will heavily use this property while performing the rounding. Indeed, we are going to use an algorithm similar to the one in \cite{bandyapadhyay2019capacitated} without the preprocessing step. In the algorithm of \cite{bandyapadhyay2019capacitated}, preprocessing ensures that for any point $p_j$, the sum of the $y$ values of the light balls that give non-zero flow to $p_j$ is at most $\alpha$. Note that this might not be true in our case for balls in $\cl_1$. At first glance it is not clear how to do the rounding without this assumption. However, as we show, a similar rounding scheme can be designed using the weaker assumption on the flow mentioned above. Another hurdle to adapt the algorithm of \cite{bandyapadhyay2019capacitated} is the monotonicity assumption, which might not be true in our case because of scaling of the capacities. However, we note that only light balls' capacities are scaled by a uniform constant scaling factor. Due to this fact, we show that their algorithm can be modified to handle our case. Next, we describe our rounding algorithm. 

The first step in our algorithm is Cluster Formation. In this step, for each ball $B_i\in \cl_1$, either $B_i$ is opened fully (added to a set $\co$) and flow from other balls including the balls in $\ch_1$ are rerouted to $B_i$ only for points in $B_i$. Otherwise, $B_i$ joins a cluster of a ball in $\ch_1$ to which its entire flow is rerouted. $\co$ is initialized to the empty set. 
%Let $\cf$ be the set of balls that are opened fully, which is initialized to $\ch_1$.  
%Let $\ch_1$ and $\cl_1$ be the respective set of heavy and light balls w.r.t. $\sol$. 
For each ball $B_i\in \ch_1$, initialize the cluster of $B_i$, cluster$(B_i)$ to $\{B_i\}$. 
%Eventually balls from $\cl_1$ will be added to $F$. A subset of these balls will be denoted as special balls. Each such special ball would contain exactly one special point. Whenever we reroute flow for a point $p_{\ell}$ from balls in $F$ to a ball $B_t\in \cl_1$, then for any special ball $B_i$ in $F$, we reroute flow for $p_{\ell}$ from $B_i$ to $B_t$ only if $p_{\ell}$ is the special point corresponding to $B_i$.
During the course of the algorithm, let $\Lambda\subseteq \cl_1$ be the set of balls which are not yet added to $\co$ or to a cluster of a ball in $\ch_1$. Throughout the algorithm, we maintain the invariant that for any point $p_j$ which is served by a ball in $\Lambda$, $p_j$ receives a flow of at least $1-4\alpha$ from the balls in $\ch_1$. Note that in the beginning of the algorithm this is true, as $\Lambda=\cl_1$. At any point, the available capacity of a ball $B_i$, $AC(B_i)=U_i'-\sum_{j\in P_1} \overline x_{ij}$. While the set $\Lambda$ is non-empty, apply the following steps. 

\begin{description}
\item While there is a ball $B_i\in \ch_1$ and $B_{i'}\in \Lambda$ such that $B_i$ intersects $B_{i'}$ and $AC(B_i)$ is at least the flow out $\sum_{j\in P_1} \overline x_{i'j}$ of $B_{i'}$, reroute the flow from $B_{i'}$ to $B_i$. Add $B_{i'}$ to cluster$(B_i)$. If $\Lambda$ becomes empty at this point, go to the Selection of Balls stage.   

\item For any ball $B_j\in \Lambda$, let $\ca_j$ be the set of points currently being served by $B_j$. Also, let $k_j=\min\{U_j',|\ca_j|\}$. We add a ball $B_t\in \Lambda$ to $\co$ such that $k_t$ is the maximum over all $k_j$ for $B_j\in \Lambda$. 

\item Next we assign points up to larger extents to $B_t$ to utilize its capacity. There are three cases. 
 
\begin{enumerate}
    \item $k_t >2$. Note that the flow out of $B_t$, $\sum_{j\in P_1} \overline x_{tj}\le 10\alpha U_t'$. Also, as $\overline x_{tj}=x_{tj}\le y_t\le \alpha$, $\sum_{j\in P_1} \overline x_{tj}\le \alpha|\ca_t|\le 10\alpha |\ca_t|$. Thus, $AC(B_t)\ge (1-10\alpha)k_t$. In this case, we arbitrarily select $\lfloor (1-10\alpha)k_t\rfloor$ points served by $B_t$ and for each of those points $p_{\ell}$, we reroute the maximum (whole) amount of flow possible from all other balls to $B_t$. Note that $p_{\ell}$ is no longer served by a ball in $\Lambda$, and thus the invariant is satisfied. 
    \item $1\le k_t\le 2$. If $U_t' \ge |\ca_t|$, then $|\ca_t|=k_t$. In this case, for each of the $k_t$ points served by $B_t$, we reroute the maximum amount of flow possible from all other balls to $B_t$. In the other case, $U_t' < |\ca_t|$. Now, $AC(B_t)\ge (1-10\alpha)U_t'\ge 1-10\alpha$. The last inequality follows, as $U_t'\ge 1$. We arbitrarily select a point $p_{\ell}$ that is being served by $B_t$ and reroute its flow from $\Lambda$ to $B_t$. Let $f$ be the amount of flow that now $p_{\ell}$ receives from $B_t$. Note that $f\le 4\alpha$. Also, $p_{\ell}$ is no longer served by a ball in $\Lambda$. Now, $AC(B_t)\ge 1-10\alpha-4\alpha=1-14\alpha$. We reroute $\min\{AC(B_t),1-f\}$ amount of flow from $\ch_1$ to $B_t$ for $p_{\ell}$. In any case, the points whose flow are routed to $B_t$ in this step are no longer served by a ball in $\Lambda$, and thus the invariant is satisfied.
    \item $0< k_t < 1$. Note that, as $|\ca_t|\ge 1$, $k_t=U_t' < 1$. Now, $AC(B_t)\ge (1-10\alpha)U_t'$. Consider any arbitrary point $p_{\ell}$ that is being served by $B_t$. First, reroute its flow from $\Lambda$ to $B_t$. $AC(B_t)\ge (1-10\alpha)U_t'-4\alpha$. Note that after this rerouting, $p_{\ell}$ is no longer served by balls in $\Lambda$, and thus the invariant is satisfied. Let $p_{\ell}$ gets a flow of $f_1$ from the balls in $\ch_1$. By the invariant we maintain, $f_1$ is at least $1-4\alpha$. Reroute $\min\{AC(B_t),f_1\}$ amount of flow of $p_{\ell}$ from the balls in $\ch_1$ to $B_t$. 
%    \item $0< k_t < 1$. In this case, $B_t$ is a called a special ball. Note that, as $|\ca_t|\ge 1$, $k_t=U_t' < 1$. Now, $AC(B_t)\ge (1-10\alpha)U_t'$. Consider any arbitrary point $p_{\ell}$ that is being served by $B_t$. $p_{\ell}$ is called the special point corresponding to $B_t$. Let $p_{\ell}$ gets a flow of $f_1$ from the balls in $\cf$. By the invariant we maintain, $f_1$ is at least $1-4\alpha$. Reroute $\min\{AC(B_t),f_2\}$ amount of flow of $p_{\ell}$ from the balls in $\cf$ to $B_t$. Note that after this rerouting $p_{\ell}$ is still served by balls in $\Lambda$. But, as $B_t$ is added to $\cf$ in the next step, we maintain the invariant that  $p_{\ell}$ gets a flow of at least $1-4\alpha$ from the balls in $\cf$.  
\end{enumerate}

\end{description}

When the while loop terminates each ball in $\cl_1$ is either in $\co$ or added to a cluster. For each $B_i\in \co$, we set $\overline y_i=1$ and cluster$(B_i)=\{B_i\}$. 

We note that the third case ($0< k_t < 1$) mentioned above does not occur in the context of \cite{bandyapadhyay2019capacitated}, as in their case for each ball $B_j$, both $U_j$ and $|\ca_j|$ are at least $1$. This case appears to be the bottleneck for our algorithm and leads to a larger constant of approximation as we will describe in the analysis. 

The Selection of Balls step is more interesting in our case as the monotonicity property no longer holds in general. For a cluster of a ball in $\co$, we trivially select this ball. Consider the cluster of any ball $B_h \in \ch_1$. If $B_h$ is one of the top 10 largest balls in the cluster, then select all the balls larger than $B_h$ and also $B_h$. Only $B_h$ is expanded by a factor of 3. The flow rerouted from any selected ball of $\cl_1$ to $B_h$ in the Cluster Formation step is assigned to it. Note that for the remaining balls of $\cl_1$ which are in the same cluster and not chosen, are smaller than $B_h$, and thus can be covered by a factor 3 expansion of $B_h$. The remaining flow is assigned to $B_h$. Otherwise, the top 10 largest balls are selected all of which are in $\cl_1$. The flow rerouted from any selected ball to $B_h$ in the Cluster Formation step is assigned to the ball. Now consider the remaining flow assigned to the cluster. Also consider a point $p_j$ which receives a part of this flow and not in any of the selected balls. Then, by 5 factor expansion, any selected ball can cover $p_j$. We expand each selected ball by 5 factor and the remaining flow is assigned arbitrarily to selected balls respecting their capacity. 

\subsubsection{Analysis}

Let $I$ be the number of iterations of the outermost while loop. Also, let $L_t$ be the ball of $\cl_1$ added to $\co$ at iteration $1\le t\le I$. For a ball $B_i\in \ch_1$, let $F(L_t,B_i)$ be the amount of flow rerouted from $B_i$ to $L_t$. Let $F_t=\sum_{B_i\in \ch_1} F(L_t,B_i)$. The next lemma states that when $L_t$ is added to $\co$ sufficient amount of flow is rerouted from the balls in $\ch_1$ to $L_t$ irrespective of the value of $k_t$. 

\begin{lemma}\label{lem:boundonFt}
For $1\le i\le I$, $F_t\ge k_t/60$ for $\alpha\le 1/60$. 
\end{lemma}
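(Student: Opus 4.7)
I would prove the bound by case analysis following the three cases used when $L_t$ is added to $\co$. The essential input throughout is the loop invariant that any point $p_\ell$ still served by a ball of $\Lambda$ currently receives flow at least $1-4\alpha$ from $\ch_1$. Since $L_t\in\Lambda$ at the start of iteration $t$, this lower bound applies to every point in $\ca_t$, and the argument in each case is to charge the flow newly sent into $L_t$ against this reserve in $\ch_1$.

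In Case 1 ($k_t>2$), the algorithm fully assigns $\lfloor(1-10\alpha)k_t\rfloor$ points to $L_t$. For each such $p_\ell$, the additional flow $1-\overline x_{t\ell}$ routed into $L_t$ can absorb at most $4\alpha-\overline x_{t\ell}$ from $\cl_1\setminus\{L_t\}$ (by the $P_1$-definition), so the remaining amount, at least $1-4\alpha$, must come from $\ch_1$. Hence $F_t\ge (1-4\alpha)\lfloor(1-10\alpha)k_t\rfloor$, which for $\alpha\le 1/60$ comfortably beats $k_t/60$: for $k_t\in(2,3)$ the floor is already $\ge 1$ so $F_t\ge 56/60\ge k_t/60$, and for $k_t\ge 3$ a direct check $5k_t/6-1\ge k_t/2$ shows the floor exceeds $k_t/2$.

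Case 2 ($1\le k_t\le 2$) has two subcases. If $U_t'\ge|\ca_t|$ then $k_t=|\ca_t|\in\{1,2\}$ and the Case 1 argument applied to every point gives $F_t\ge k_t(1-4\alpha)$. Otherwise a single point $p_\ell$ is processed; the algorithm description ensures $AC(L_t)\ge 1-14\alpha$ after rerouting $p_\ell$'s $\Lambda$-flow, and the transfer from $\ch_1$ is $\min\{AC(L_t),1-f\}\ge 1-14\alpha$, since $f\le 4\alpha$ and, by the invariant, $p_\ell$ has $\ge 1-4\alpha\ge 1-14\alpha$ available in $\ch_1$. Both subcase estimates dominate $k_t/60\le 2/60$.

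Case 3 ($0<k_t<1$) is the tight case and determines the constants. Here $k_t=U_t'$, and because $U_t\in\mathbb{N}$ we have $U_t'\ge 1/10$. After rerouting $p_\ell$'s $\Lambda$-flow (of size at most $4\alpha$) into $L_t$, one has $AC(L_t)\ge(1-10\alpha)U_t'-4\alpha$, while the invariant gives $f_1\ge 1-4\alpha$. Since $U_t'<1$ the capacity is the smaller of the two, so $F_t\ge(1-10\alpha)U_t'-4\alpha$. The desired $F_t\ge U_t'/60$ rearranges to $U_t'(59/60-10\alpha)\ge 4\alpha$, which at $\alpha=1/60$ is $U_t'\ge 4/49$, satisfied with slack since $U_t'\ge 1/10=4.9/49$. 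This is the main obstacle: the apparently peculiar constants $\alpha\le 1/60$, the target $k_t/60$, and the capacity scaling factor $10$ (which forces $U_t'\ge 1/10$) are jointly chosen so that Case 3 just barely closes, and essentially all the content of the lemma lies in this tight bookkeeping.
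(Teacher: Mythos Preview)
Your proposal is correct and follows essentially the same three-case analysis as the paper's own proof, using the maintained invariant that every point still served by a ball in $\Lambda$ receives at least $1-4\alpha$ flow from $\ch_1$. The only differences are cosmetic: in Case~1 you split into $k_t\in(2,3)$ and $k_t\ge 3$ whereas the paper uses a single coarser estimate $\lfloor(5/6)k_t\rfloor\ge k_t/14$; and in Case~3 you verify $(1-10\alpha)U_t'-4\alpha\ge U_t'/60$ directly, whereas the paper instead lower-bounds by plugging in $k_t\ge 1/10$ to get $F_t\ge 1/10-5\alpha\ge 1/60> k_t/60$. Your identification of Case~3 as the binding constraint that dictates the constants is exactly right.
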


\begin{proof}
To compute the flow rerouted from balls in $\ch_1$ to $B_t$ we refer to the three cases mentioned in Cluster Formation. In the first case, for $\lfloor (1-10\alpha)k_t\rfloor$ points, the flow is rerouted from $\ch_1$ to $B_t$. Note that by the invariant we maintain, for each such point $p_{\ell}$, $p_{\ell}$ receives at least $1-4\alpha$ amount of flow from the balls in $\ch_1$. It follows that, at least $1-4\alpha$ amount of flow is rerouted for $p_{\ell}$ and $F_t\ge (1-4\alpha)\lfloor (1-10\alpha)k_t\rfloor\ge (14/15)\lfloor (5/6)k_t\rfloor\ge (14/15) (1/14)k_t=k_t/15\ge k_t/60$. The second inequality follows as $\alpha\le 1/60$ and the third inequality follows as $k_t> 2$. 

In the second case, using the same argument as above, the amount of flow rerouted from $\ch_1$ to $B_t$ is at least $1-14\alpha$. As $k_t \le 2$, $F_t$ is at least $(1-14\alpha)k_t/2 \ge (23/60)k_t\ge k_t/60$. The first inequality is true for $\alpha \le 1/60$. 

In the third case, again using the same argument as above, the amount of flow rerouted from $\ch_1$ to $B_t$ is at least $\min\{(1-10\alpha)k_t-4\alpha,1-4\alpha\}$. As $k_t < 1$, $1-4\alpha \ge (1-4\alpha)k_t$. Thus, $F_t\ge (1-10\alpha)k_t-4\alpha$. As $U_t \ge 1$, $k_t=U_t'\ge 1/10$, and hence  $F_t\ge (1-10\alpha)/10-4\alpha=1/10-5\alpha\ge 1/60$. The last inequality follows from the fact that $\alpha \le 1/60$. 
\end{proof}

%Let $\co'$ be the set of special balls added to $\cf$. 
Define the $y$-credit of a ball $B_i\in \ch_1$ as $Y(L_t,B_i)=F(L_t,B_i)/k_t$. 
%The $y$-credit of a ball $B_i\in  \co'$ is defined as $Y(L_t,B_i)=F(L_t,B_i)$. 
At any moment during the Cluster Formation stage, define the $y$-accumulation of $B_i$ as $\Tilde{y}(B_i)=\sum_{L_t\in \co} Y(L_t,B_i)$ $-\sum_{B_i\in \cl_1\cap \text{cluster}(B_i)} \overline y_i$. The $y$-credit $Y(L_t,B_i)$ of $B_i$ can be seen as a normalized load it transfers to $L_t$. The $y$-accumulation $\Tilde{y}(B_i)$ is basically the difference between the total $y$-credit received by $B_i$ and the sum of normalized flows of the balls absorbed by $B_i$. The next lemma gives a lower bound on the available capacities of the balls in $\ch_1$, which is similar to Lemma 3.3 of \cite{bandyapadhyay2019capacitated}. 
%For simplicity, we refer to the step before iteration 1 when all initializations are performed as the iteration 0. Thus, the balls in $\ch_1$ are added to $\cf$ in iteration 0. 

\begin{lemma}\label{lem:ycreditlb}
Consider a  ball $B_i\in \ch_1$  and any integer $1\le t\le I$. Suppose the balls $L_1,\ldots,L_t$ have been added to $\co$ so far. Then, $AC(B_i)\ge \Tilde{y}(B_i)k_t$. 
\end{lemma}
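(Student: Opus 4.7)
The plan is to prove the bound by induction on $t$, tracking how $AC(B_i)$ and $\tilde{y}(B_i)$ evolve during iteration $t$. The base case is essentially vacuous: at the start, $\tilde{y}(B_i)=0$ because $\co$ and $\cl_1\cap\text{cluster}(B_i)$ are empty, while $AC(B_i)\ge 0$ by the capacity constraint on $B_i$. For the inductive step, I would decompose iteration $t$ into its greedy absorption phase (the inner while loop that moves light balls into heavy-ball clusters) and its $L_t$-insertion phase (picking $L_t$ with maximum $k_j$ and rerouting flow to it via the three cases). The goal is to show that $AC(B_i)-\tilde{y}(B_i)k_{t-1}$ never decreases during the absorption phase, then to handle the change of scaling from $k_{t-1}$ to $k_t$, and finally to observe that the $L_t$-insertion is neutral with respect to $AC(B_i)-\tilde{y}(B_i)k_t$.

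The crux is the following claim for the absorption phase: whenever a light ball $B_{i'}\in\Lambda$ is absorbed into $B_i$'s cluster during iteration $t$, its flow out satisfies $f_{i'}\le\overline{y}_{i'}k_{t-1}$. Since $\overline{x}_{i'j}\le\overline{y}_{i'}$ for every $p_j$ and $\sum_j\overline{x}_{i'j}\le\overline{y}_{i'}U_{i'}'$ by Constraints \ref{constr:open1} and \ref{constr:cap1}, we get $f_{i'}\le\overline{y}_{i'}\min\{|\ca_{i'}|,U_{i'}'\}=\overline{y}_{i'}k_{i'}$, where $k_{i'}$ is measured at the instant of absorption. It then suffices to check $k_{i'}\le k_{t-1}$ at that instant. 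Since $B_{i'}$ survives until iteration $t$, it was in $\Lambda$ at the moment $L_{t-1}$ was picked, so by the greedy selection $k_{i'}\le k_{t-1}$ at that moment; nothing thereafter can increase $k_{i'}$, because rerouting can only decrease $|\ca_{i'}|$ and sub-step~1 modifies only the absorbed ball's own $\overline{x}$-values. Each absorption therefore changes $AC(B_i)-\tilde{y}(B_i)k_{t-1}$ by $\overline{y}_{i'}k_{t-1}-f_{i'}\ge 0$, so after sub-step~1 the quantity $AC(B_i)-\tilde{y}(B_i)k_{t-1}$ is still non-negative by the inductive hypothesis.

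Finally, the choice of $L_t$ as the maximizer of $k_j$ over the reduced $\Lambda$ gives $k_t\le k_{t-1}$, since sub-step~1 does not alter $k_j$ of any surviving ball. A short sign-case argument on $\tilde{y}(B_i)$ after sub-step~1 then upgrades the estimate $AC(B_i)\ge\tilde{y}(B_i)k_{t-1}$ to $AC(B_i)\ge\tilde{y}(B_i)k_t$: when $\tilde{y}(B_i)\ge 0$ use $\tilde{y}(B_i)k_t\le\tilde{y}(B_i)k_{t-1}\le AC(B_i)$, and otherwise use $\tilde{y}(B_i)k_t\le 0\le AC(B_i)$. The $L_t$-insertion phase then simultaneously increases $AC(B_i)$ by $F(L_t,B_i)$ and $\tilde{y}(B_i)k_t$ by $k_t\cdot Y(L_t,B_i)=F(L_t,B_i)$, so $AC(B_i)-\tilde{y}(B_i)k_t$ is unchanged, closing the induction. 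The main obstacle is precisely establishing $k_{i'}\le k_{t-1}$ at the instant of absorption---without this bookkeeping, an absorption could decrease $AC(B_i)$ faster than it decreases $\tilde{y}(B_i)k_{t-1}$, and the invariant could collapse.
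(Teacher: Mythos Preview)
Your proposal is correct and follows essentially the same inductive argument as the paper's own proof. The paper also inducts on $t$, bounds the flow out of an absorbed light ball $B_p$ by $\overline{y}_p\min\{|\ca_p|,U_p'\}\le\overline{y}_p k_{t-1}$, and then handles the $L_t$-insertion with a sign-split on $\tilde y(B_i)$ together with $k_t\le k_{t-1}$; you simply perform the $k_{t-1}\to k_t$ conversion \emph{before} the insertion and then observe the insertion is neutral, which is an equivalent reorganization. If anything, your write-up is more careful: the paper asserts $\min\{k,U_p'\}\le k_{t-1}$ without comment, whereas you spell out that this follows from the greedy maximality of $L_{t-1}$ over the surviving $\Lambda$ and the fact that $k_{i'}$ cannot increase afterward. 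One small point to tidy: your inductive bookkeeping invokes $k_{t-1}$ in iteration $t=1$, so you should either treat $t=1$ separately (where after absorption $\tilde y(B_i)\le 0\le AC(B_i)$ holds directly) or adopt the convention $k_0=+\infty$.
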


\begin{proof}
%For a ball $B_i\in \co-\co'$, $B_i$ never gets any flow from a newly added ball in $\co$. Also, no other ball can join its cluster. Hence, throughout the algorithm $\Tilde{y}(B_i)=0$. As $AC(B_i)\ge 0$, the claim is true for such a ball. 
For any ball $B_i\in \ch_1$, we prove the claim using induction on iteration number. In the base case, just after addition of $L_1$, $AC(B_i)\ge F(L_1,B_i)=Y(L_1,B_i)k_1 = \Tilde{y}(B_i)k_1$. 
%If $B_i\in \co'$, $B_i$ is added to $F$ in iteration $j$, and thus $L_j=B_i$. After addition of $L_j$, $\Tilde{y}(B_i)=0$. Thus, again the claim is true. 
Now, suppose the claim is true for any $t-1$. We show that the claim is true for $t$ as well. 

Consider the iteration $t$. Note that $AC(B_i)\ge \Tilde{y}(B_i)k_{t-1}$. Suppose a subset of balls have joined cluster of $B_i$. Let $B_p$ be the first ball joined, which serves $k$ points. To distinguish between the old and new value of $\Tilde{y}(B_i)$, we refer to the new value by $\Tilde{y}(B_i)'$.  After $B_p$'s joining to cluster of $B_i$, $\Tilde{y}(B_i)'=\Tilde{y}(B_i)-\overline y_p$. Now, the total flow out of $B_p$ is at most $\min\{\overline y_p k,\overline y_p U_p'\}=\overline y_p\min\{k,U_p'\}\le \overline y_p k_{t-1}$. Thus, $AC(B_i)\ge \Tilde{y}(B_i)k_{t-1}-\overline y_p k_{t-1}=\Tilde{y}(B_i)'k_{t-1}$. Using the same argument it can be shown that after each subsequent addition of a ball to cluster of $B_i$ the claim is true.     

In the next step, $L_t$ is added to $\co$. Let $\Tilde{y}(B_i)$ be the y-accumulation before this. After this addition, the new y-accumulation $\Tilde{y}(B_i)' =\Tilde{y}(B_i)+Y(L_t,B_i)$. If $\Tilde{y}(B_i) \le 0$, the new available capacity $A_i'\ge Y(L_t,B_i)k_t\ge \Tilde{y}(B_i)'k_t$. Otherwise, $\Tilde{y}(B_i) > 0$, the new available capacity by the induction hypothesis is, $A_i'=AC(B_i)+Y(L_t,B_i)k_t\ge \Tilde{y}(B_i)k_{t-1}+Y(L_t,B_i)k_t\ge (\Tilde{y}(B_i)+Y(L_t,B_i))k_t=\Tilde{y}(B_i)'k_t$. 
\end{proof}

The next lemma shows that for any ball $B_i\in \ch_1$, $y$-accumulation is bounded, which is similar to Lemma 3.4 of \cite{bandyapadhyay2019capacitated}. 

\begin{lemma}\label{lem:ycreditupperbounds}
At any point, for any ball $B_i\in \ch_1$, $\Tilde{y}(B_i)< 1+10\alpha$. 
 %For a special ball $B_i\in \cf\cap \co'$, $\Tilde{y}(B_i)\le 4\alpha$. For a ball $B_i\in \cf\cap \co$ which is not special, $\Tilde{y}(B_i)=0$. 
\end{lemma}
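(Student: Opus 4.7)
The plan is to induct on the sequence of events that change $\Tilde{y}(B_i)$ during Cluster Formation. There are exactly two kinds of events: (a) a light ball $B_p$ joins cluster$(B_i)$, which subtracts $\overline{y}_p$ from $\Tilde{y}(B_i)$ and can only decrease it; and (b) some ball $L_t$ is added to $\co$, which increases $\Tilde{y}(B_i)$ by $Y(L_t,B_i)=F(L_t,B_i)/k_t$. Thus it suffices to show that immediately after an event of type (b), the value of $\Tilde{y}(B_i)$ is still strictly less than $1+10\alpha$, assuming this holds before the event.

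Fix the iteration $t$ in which $L_t$ is added, and assume $F(L_t,B_i)>0$ (otherwise nothing happens). The key observation is that $F(L_t,B_i)>0$ means $B_i$ currently serves (with positive flow) some point $p_\ell\in B_t$; by Constraint~\ref{constr:coverage1} we also have $p_\ell\in B_i$, so $B_i$ and $L_t$ intersect. At the moment $L_t$ is chosen, the cluster-joining loop has just terminated, so no pair $(B_i\in\ch_1,\,B_{i'}\in\Lambda)$ is absorbable; in particular, with $B_{i'}=L_t\in\Lambda$,
\[
  AC(B_i)\;<\;\sum_{j\in P_1}\overline{x}_{tj}\;\le\;\overline{y}_t\cdot\min\{|\ca_t|,U_t'\}\;=\;\overline{y}_t\,k_t\;\le\;10\alpha\,k_t,
\]
where the middle inequality uses both $\overline{x}_{tj}\le\overline{y}_t$ (Constraint~\ref{constr:open1}) and the capacity bound, and the last uses $\overline{y}_t\le 10\alpha$ for light balls.

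Next I invoke Lemma~\ref{lem:ycreditlb}: at the end of iteration $t-1$, $AC(B_i)\ge \Tilde{y}(B_i)\,k_{t-1}$, and the bookkeeping in that lemma's proof shows this inequality is preserved through the cluster-joining performed at the start of iteration $t$. Because $|\ca_j|$ is non-increasing over time for every $B_j\in\Lambda$, the quantity $k_j=\min\{U_j',|\ca_j|\}$ is non-increasing, and $\Lambda$ itself only shrinks, so $k_t\le k_{t-1}$. Combining these, provided $\Tilde{y}(B_i)\ge 0$ (the case $\Tilde{y}(B_i)<0$ is trivial since then $\Tilde{y}(B_i)'<Y(L_t,B_i)\le 1$), we get $\Tilde{y}(B_i)\,k_t\le AC(B_i)<10\alpha\,k_t$, i.e.\ $\Tilde{y}(B_i)<10\alpha$ just before $L_t$ is added.

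It remains to bound $Y(L_t,B_i)=F(L_t,B_i)/k_t$. The plan is to observe that in each of the three rerouting cases of the algorithm, the total flow rerouted into $B_t$ (from all balls, including $\ch_1$) is bounded by $AC(B_t)\le U_t'=k_t$ or $k_t$ respectively: in Case 1 one assigns at most $\lfloor(1-10\alpha)k_t\rfloor\le k_t$ points fully; in Case 2 one reroutes at most $\min\{AC(B_t),1-f\}\le AC(B_t)\le k_t$ from $\ch_1$; in Case 3 one reroutes $\min\{AC(B_t),f_1\}\le U_t'=k_t$. Hence $F(L_t,B_i)\le k_t$ and so $Y(L_t,B_i)\le 1$. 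Putting it together, after $L_t$ is added, $\Tilde{y}(B_i)<10\alpha+1=1+10\alpha$, which completes the induction. The main subtlety to watch is step (3)---making precise that the cluster-joining loop's termination condition forces $AC(B_i)<10\alpha k_t$ against $L_t$ specifically---and handling the compatibility of Lemma~\ref{lem:ycreditlb} across the intermediate states of iteration $t$; both are resolved via the monotonicity $k_t\le k_{t-1}$ and the fact that $B_i\cap L_t\ne\emptyset$ whenever $F(L_t,B_i)>0$.
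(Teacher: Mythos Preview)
Your proof is the paper's argument recast as a direct induction rather than a proof by contradiction: both hinge on (i) the termination of the absorb loop to relate $AC(B_i)$ to the flow out of $L_t$, (ii) Lemma~\ref{lem:ycreditlb} together with $k_t\le k_{t-1}$, and (iii) the bound $Y(L_t,B_i)\le 1$, combined in the same way to reach $1+10\alpha$. The paper simply runs your chain backwards (from $\Tilde{y}(B_i)\ge 10\alpha$ to $AC(B_i)\ge 10\alpha k_t$ to ``$L_t$ should have been absorbed'').

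One caveat on the step where you conclude $p_\ell\in B_i$ from Constraint~\ref{constr:coverage1}: that constraint holds for the \emph{initial} $\solbar$, but by the time $L_t$ is selected the working solution has been modified---when a light ball $B_p$ joins $\text{cluster}(B_i)$, its flow (possibly for points $p_\ell\in B_p\setminus B_i$) is rerouted to $B_i$, so $B_i$ can currently serve points outside itself and Constraint~\ref{constr:coverage1} need not hold for the current $\overline{x}_{i\ell}$. The paper's proof leaves the intersection $B_i\cap L_t\ne\emptyset$ implicit (it just asserts ``$L_t$ should have joined the cluster of $B_i$''), so this is a shared subtlety rather than a divergence from the paper; nevertheless, your explicit appeal to Constraint~\ref{constr:coverage1} at an intermediate state is not valid as stated.
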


Intuitively, if the $y$-accumulation of $B_i$ exceeds the bound, it must be due to selection of a ball $L_t$ in $\cl_1$. However, one can show that $B_i$ had enough available capacity to absorb the flow from $L_t$. Hence, the bound follows. 

\begin{proof}
Let $B_i\in \ch_1$ be the first ball for which $\Tilde{y}(B_i)\ge 1+10\alpha$. As $\Tilde{y}(B_i)$ increases due to addition of balls in $\Lambda$ to $\co$, let $L_t$ be the ball whose addition increases $\Tilde{y}(B_i)$ from less than $1+10\alpha$ to at least $1+10\alpha$. 
Let $\Tilde{y}(B_i)$ and $\Tilde{y}(B_i)'$ be the y-accumulation before and after addition of $L_t$. Thus, $\Tilde{y}(B_i)< 1+10\alpha$. 
Now, $\Tilde{y}(B_i)'=\Tilde{y}(B_i)+Y(L_t,B_i)\ge 1+10\alpha$. As $\Tilde{y}(B_i)'>\Tilde{y}(B_i)$, $Y(L_t,B_i) > 0$. However, by definition $Y(L_t,B_i) \le 1$. Thus, $\Tilde{y}(B_i) \ge 10\alpha$. 

Now by Lemma \ref{lem:ycreditlb}, just before addition of $L_t$, $AC(B_i)\ge \Tilde{y}(B_i)k_{t-1} \ge 10\alpha k_t$. However, total flow out of $L_t$ is at most $10\alpha k_t$, as $L_t \in \cl_1$. Thus, $L_t$ should have joined the cluster of $B_i$, which is a contradiction. Hence, $\Tilde{y}(B_i)< 1+10\alpha$. 
%Now consider a ball $B_i \in \cf\cap \co'$. Let $p_{\ell}$ be the corresponding special point. Now, flow can be rerouted from $B_i$ to a ball $B_j$ in $\Lambda$ which is added to $\cf$ after $B_i$ only for the point $p_{\ell}$. However, the total amount of flow that $p_{\ell}$ can get from those balls is at most $4\alpha$ due to Lemma \ref{lem:enoughflowfromF}. Thus, $\Tilde{y}(B_i)\le 4\alpha$.  
\end{proof}

The following lemma gives an upper bound on the number of balls of $\cl_1$ that are fully opened. 

\begin{lemma}\label{lem:boundonO}
At the end of the Cluster Formation stage, $|\co|\le 60((1+10\alpha)|\ch_1|+\sum_{B_i\in \cl_1} \overline y_i)$. 
\end{lemma}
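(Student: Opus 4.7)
The plan is a standard amortized/charging argument: each ball $L_t$ added to $\co$ is charged, via its $y$-credits, against the heavy balls' $y$-accumulations plus the $\overline y$-mass of light balls that joined the clusters. The key quantitative inputs are the lower bound $F_t \ge k_t/60$ from Lemma~\ref{lem:boundonFt} and the upper bound $\Tilde y(B_i) < 1 + 10\alpha$ from Lemma~\ref{lem:ycreditupperbounds}, and the proof should be little more than combining them.

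Concretely, the first step is to observe that by the definition of the $y$-credit, $\sum_{B_i \in \ch_1} Y(L_t, B_i) = F_t/k_t$, which is at least $1/60$ by Lemma~\ref{lem:boundonFt}. Summing over all $L_t \in \co$ gives
\[
  \frac{|\co|}{60} \;\le\; \sum_{L_t \in \co}\,\sum_{B_i \in \ch_1} Y(L_t, B_i) \;=\; \sum_{B_i \in \ch_1}\,\sum_{L_t \in \co} Y(L_t, B_i).
\]

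The second step is to bound each inner sum using the definition of $y$-accumulation. Since $\Tilde y(B_i) = \sum_{L_t \in \co} Y(L_t,B_i) - \sum_{B_j \in \cl_1 \cap \text{cluster}(B_i)} \overline y_j$, we get
\[
  \sum_{L_t \in \co} Y(L_t, B_i) \;=\; \Tilde y(B_i) \;+\; \sum_{B_j \in \cl_1 \cap \text{cluster}(B_i)} \overline y_j \;<\; (1+10\alpha) \;+\; \sum_{B_j \in \cl_1 \cap \text{cluster}(B_i)} \overline y_j,
\]
using Lemma~\ref{lem:ycreditupperbounds}. Summing over $B_i \in \ch_1$ and noting that every ball $B_j \in \cl_1$ that is absorbed into some cluster appears in exactly one heavy cluster (so the double sum is bounded by $\sum_{B_j \in \cl_1} \overline y_j$), we obtain
\[
  \sum_{B_i \in \ch_1}\,\sum_{L_t \in \co} Y(L_t, B_i) \;\le\; (1+10\alpha)\,|\ch_1| \;+\; \sum_{B_i \in \cl_1} \overline y_i.
\]
Chaining this with the first inequality and multiplying by $60$ yields the claim.

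I don't expect a serious obstacle here; the only mild subtlety is making sure the accounting of $\overline y$-mass is tight, i.e., that each light ball joining a cluster is counted only once on the right-hand side. This is immediate because clusters of distinct heavy balls are disjoint by construction of the Cluster Formation step, and the light balls that end up in $\co$ contribute $0$ to this double sum (they never join a heavy cluster). Everything else is a direct substitution of the previously established lemmas.
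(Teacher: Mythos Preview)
Your proposal is correct and follows essentially the same approach as the paper's proof: both combine the lower bound $F_t \ge k_t/60$ from Lemma~\ref{lem:boundonFt} with the upper bound $\Tilde y(B_i) < 1+10\alpha$ from Lemma~\ref{lem:ycreditupperbounds} via the identity relating $\sum_{L_t}\sum_{B_i} Y(L_t,B_i)$ to $\sum_{B_i}\Tilde y(B_i)$ plus the absorbed $\overline y$-mass. The only cosmetic difference is that the paper starts from $\sum_{B_i\in\ch_1}\Tilde y(B_i)$ and derives the two bounds from there, whereas you start from $|\co|/60$ and work upward; the substance is identical.
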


\begin{proof}
\begin{align*}
\sum_{B_i\in \ch_1} \Tilde{y}(B_i) &= \sum_{B_i\in \ch_1} \sum_{L_t\in \co} Y(L_t,B_i)-\sum_{B_i\in \ch_1} \sum_{B_i\in \cl_1\cap \text{cluster}(B_i)} \overline y_i\\
&\ge \sum_{B_i\in \ch_1} \sum_{L_t\in \co} F(L_t,B_i)/k_t-\sum_{B_i\in \cl_1} \overline y_i\\
&= \sum_{t=1}^I F_t/k_t-\sum_{B_i\in \cl_1} \overline y_i\\
%&\ge \sum_{B_i\in \cf} \sum_{L_t\in \co: k_t \ge 1} F(L_t,B_i)/k_t+\sum_{B_i\in \cf} \sum_{L_t\in \co: k_t < 1} F(L_t,B_i)/k_t-\sum_{B_i\in \cl_1} y_i\\
&\ge |\co|/60-\sum_{B_i\in \cl_1} \overline y_i \qquad (F_t\ge k_t/60 \text{ by Lemma } \ref{lem:boundonFt})
\end{align*}

Also, by Lemma \ref{lem:ycreditupperbounds}, $\sum_{B_i\in \ch_1} \Tilde{y}(B_i) \le (1+10\alpha)|\ch_1|$. It follows that, $|\co|\le 60((1+10\alpha)|\ch_1|+\sum_{B_i\in \cl_1} \overline y_i)$. 
\end{proof}

We obtain the following bound on the cost of the rounded solution. 

\begin{lemma}\label{lem:approxfactor}
When the algorithm terminates the total cost of the solution is at most $10|\ch_1|+|\co|\le (70+600\alpha) \emph{cost}(\sol^*)/\alpha$.
\end{lemma}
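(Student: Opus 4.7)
The proof splits naturally into a combinatorial counting step followed by an algebraic step that invokes Lemma \ref{lem:boundonO} and the definition of $\overline{\sol}$. My plan is to first read off the additive bound $10|\ch_1|+|\co|$ on the number of selected balls directly from the Selection of Balls rule, and then to express everything in terms of $\cost(\sol^*)$ by exploiting the relation $\overline{y}_i = 10\,y_i^*$ for light balls together with the earlier bound $\cost(\overline{\sol}) \le \cost(\sol^*)/\alpha$.

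For the counting bound, I would argue cluster by cluster. The Selection of Balls rule produces at most $10$ selected balls per heavy-ball cluster: either $B_h$ is among the top-$10$ largest balls of its cluster, in which case the algorithm selects $B_h$ together with at most $9$ strictly larger light balls; or it is not, in which case the algorithm selects exactly the $10$ largest light balls of the cluster. Summing over the $|\ch_1|$ heavy clusters gives at most $10|\ch_1|$ selections, to which one adds the $|\co|$ balls that were opened during Cluster Formation (each contributing its own singleton cluster). This yields the first inequality $10|\ch_1|+|\co|$ in the statement.

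For the algebraic step, substituting the bound $|\co|\le 60((1+10\alpha)|\ch_1|+\sum_{B_i\in\cl_1}\overline{y}_i)$ from Lemma \ref{lem:boundonO} collapses the cost to $(70+600\alpha)|\ch_1|+60\sum_{B_i\in\cl_1}\overline{y}_i$. Since $\overline{y}_i = 10\,y_i^*$ for $B_i\in\cl_1$ by construction, the second sum equals $600\sum_{B_i\in\cl_1}y_i^*$; moreover, using $\overline{y}_i=1$ on heavy balls, one has $|\ch_1|+10\sum_{B_i\in\cl_1}y_i^* = \cost(\overline{\sol})$.

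The main (and essentially only) obstacle is to amalgamate the two remaining terms under a single factor of $\cost(\overline{\sol})$ rather than bounding them separately, which would inflate the constant. The key observation that makes this work is that $600 \le 10\,(70+600\alpha)$ for every $\alpha>0$, so
\[
 (70+600\alpha)|\ch_1| + 600\sum_{B_i\in\cl_1}y_i^{*} \;\le\; (70+600\alpha)\Bigl(|\ch_1| + 10\sum_{B_i\in\cl_1}y_i^{*}\Bigr) \;=\; (70+600\alpha)\,\cost(\overline{\sol}).
\]
Applying the earlier bound $\cost(\overline{\sol})\le\cost(\sol^*)/\alpha$ then yields the target $(70+600\alpha)\,\cost(\sol^*)/\alpha$ and concludes the proof.
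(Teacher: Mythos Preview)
Your proof is correct and follows essentially the same route as the paper: count at most $10$ balls per heavy cluster plus $|\co|$, plug in Lemma~\ref{lem:boundonO}, and absorb both terms under the common factor $(70+600\alpha)$ times $\cost(\overline{\sol})$. The paper's write-up is marginally more direct in that it keeps $\sum_{B_i\in\cl_1}\overline{y}_i$ throughout (using $60 \le 70+600\alpha$) rather than unwinding to $10\sum y_i^{*}$, but the two computations are literally the same inequality after the substitution $\overline{y}_i = 10\,y_i^{*}$.
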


\begin{proof}
We note that from a heavy balls' cluster at most $10$ balls are selected and all the balls in $\co$ are selected. Now, by Lemma \ref{lem:boundonO},

\begin{align*}
10|\ch_1|+|\co| &\le 10|\ch_1|+60((1+10\alpha)|\ch_1|+\sum_{B_i\in \cl_1} \overline y_i)\\
&\le (70+600\alpha)(|\ch_1|+ \sum_{B_i\in \cl_1} \overline y_i)\\
&\le (70+600\alpha) \text{cost}(\sol^*)/\alpha
\end{align*}
\end{proof}

The following lemma shows that 5 factor expansion is sufficient to serve the points assigned to each cluster.

\begin{lemma}
Using factor 5 expansion of the balls the flow of any cluster can be assigned to the chosen balls without violating the capacities.  
\end{lemma}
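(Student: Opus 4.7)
There are two types of clusters to handle: the singleton clusters $\{B_t\}$ for each $B_t\in\co$, and the clusters of the heavy balls $B_h\in\ch_1$. The singleton case is trivial: $B_t$ alone covers each point it serves with no expansion at all, and the flow assigned to it in the Cluster Formation stage never exceeds its own capacity $U_t'$. So the plan is to focus on the heavy cluster case. For a cluster of $B_h$ the choice of selected balls branches on whether $B_h$ is among the top $10$ largest balls in the cluster, and I will handle the two branches separately. In both branches I will need two ingredients from the prior analysis: (i) the fact (analogous to Observation~\ref{obs:heavyclusters}(1)) that every absorbed light ball $B_\ell$ intersects $B_h$, which is immediate from the while-loop condition in the Cluster Formation stage, and (ii) the fact (analogous to Observation~\ref{obs:heavyclusters}(2)) that the total flow in the cluster is at most $U_h$, which follows from the rerouting rule $AC(B_i)\ge \sum_{j}\overline x_{i'j}$ enforced before any absorption.

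\textbf{Geometric bound.} The heart of the argument is the triangle-inequality calculation for selected light balls in the cluster. Fix a selected ball $B_s$ and a point $p_j$ whose flow is to be served by $B_s$. If $p_j$ was originally served by $B_s$ itself, no expansion is needed. If $p_j$ was served by $B_h$, then $d(c_h,p_j)\le r_h$ and since $B_s$ intersects $B_h$ we have $d(c_s,c_h)\le r_s+r_h$, so $d(c_s,p_j)\le r_s+2r_h$; if $r_h\le r_s$ this is at most $3r_s$. The worst case is when $p_j$ was served by another absorbed light ball $B_\ell$ of the cluster that is not selected. Then $d(c_\ell,p_j)\le r_\ell$, and two intersections with $B_h$ give $d(c_s,c_h)\le r_s+r_h$ and $d(c_\ell,c_h)\le r_\ell+r_h$, yielding
\[
 d(c_s,p_j)\le (r_s+r_h)+(r_\ell+r_h)+r_\ell = r_s+2r_h+2r_\ell.
\]
Whenever $B_s$ is one of the $10$ largest balls in the cluster, any unselected $B_\ell$ satisfies $r_\ell\le r_s$, and in the branch where $B_h$ is not in the top $10$ we also have $r_h\le r_s$; together this bounds the right-hand side by $5r_s$. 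In the branch where $B_h$ is among the top $10$, only $B_h$ (expanded by a factor of $3$) needs to serve flow from non-selected balls, and those non-selected balls are strictly smaller than $B_h$, so $r_\ell\le r_h$ and the same triangle inequality with $B_s=B_h$ gives $d(c_h,p_j)\le 3r_h$.

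\textbf{Capacity verification.} This is the step where the scaling $U_i'=U_i/10$ for light balls is used, and I expect it to be the subtlest part since monotonicity of $U_i'$ no longer holds. In the first branch the selected balls are $B_h$ together with all strictly larger balls of the cluster; the flow previously rerouted from any selected light ball $B_\ell$ back to $B_h$ is returned to $B_\ell$, which is feasible because that amount is at most $\overline y_\ell U_\ell'\le U_\ell'$ by Constraint~\ref{constr:cap1}, and the remaining flow, which is at most the total cluster flow, is assigned to $B_h$ whose capacity is $U_h'=U_h$. In the second branch the selected balls are the top $10$ largest balls of the cluster, all light; each has radius at least $r_h$, so the \emph{original} monotonicity on $\cb$ gives $U_\ell\ge U_h$ and hence $U_\ell'=U_\ell/10\ge U_h/10$. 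Summing over the $10$ selected balls yields total scaled capacity at least $U_h$, which dominates the total cluster flow by Observation~\ref{obs:heavyclusters}(2).

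\textbf{Main obstacle.} The geometric part is a straightforward triangle inequality; the only delicate point is the capacity verification in the second branch, where one must rely on the exact choice of scaling factor $10$ (balancing the number of selected balls against the capacity reduction) and on applying monotonicity of the \emph{unscaled} capacities even though the $U_i'$ are not monotone. Combining the geometric bound (factor $5$ in the worst sub-case, factor $3$ elsewhere) with the capacity bound completes the proof.
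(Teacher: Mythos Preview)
Your proposal is correct and follows essentially the same route as the paper's proof: the same two-branch case split on whether $B_h$ lies among the top $10$ largest balls, the same triangle-inequality bound $r_s+2r_h+2r_\ell\le 5r_s$ for the coverage part, and the same capacity argument that applies monotonicity to the \emph{unscaled} capacities to get $U_\ell\ge U_h$ and hence $10\,U_\ell'\ge U_h$. The paper's write-up is terser (it declares the factor-$5$ coverage ``clear from the algorithm'' and does the capacity accounting via the residual quantities $U_h-f$ and $10U_\ell'-f$), while you spell out the triangle inequality and phrase the capacity bound as total capacity versus total flow; these are the same computation.
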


\begin{proof}
It is clear from the algorithm that the coverage constraints are satisfied by expanding the balls by at most a factor of 5. Here we consider the capacity constraints. Note that in the first case the capacities of the selected light balls are trivially satisfied. Also, the remaining flow assigned to $B_h$ must have an amount at most $U_h$ due to the way balls are added to a cluster. Thus, its capacity constraint is satisfied. In the other case, let the total amount of flow rerouted from the selected 10 light balls to $B_h$ in Cluster Formation step be $f$. Also, let $B_{\ell}$ be the smallest radius ball among these 10 balls. Thus, the available capacity of all these balls is at least $10U_{\ell}'-f$. Note that $U_h \le U_{\ell}$, as $B_{\ell}$ is larger than $B_h$. Now, as each light balls' capacity is reduced to a factor 10 of the original capacity and the capacity of $B_h$ remains unchanged, $U_h\le 10U_{\ell}'$. Hence, the available capacity of all these 10 balls is at least $U_h-f$. As the remaining flow is at most $U_h-f$, it follows that the capacity constraints of these balls are satisfied.        
\end{proof}

We summarize our findings in the following lemma. 

\begin{lemma}\label{lem:round1}
 The solution $(\overline x,\overline y)$ satisfies all the Constraints of \ref{LP1} except Constraint \ref{constr:coverage1}. Moreover, 
 \begin{enumerate}
    \item $\overline y_i=1$ for all $B_i \in \ch_1\cup \co$ and $\overline y_i=0$ for all other balls.
     \item For any $p_j\in P_1$, $\sum_{B_i \in \ch_1\cup \co} \overline {x}_{ij} = 1$.
     \item For any point $p_j\in P_1$, if $\overline {x}_{ij}>0$, $d(c_i,p_j)\le 5\cdot r_i$.  
     \item $\emph{cost}((\overline x,\overline y))\le (70+600\alpha) \emph{cost}(\sol^*)/\alpha$. 
 \end{enumerate}
\end{lemma}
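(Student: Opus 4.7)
The plan is to assemble the four claims from facts already in hand; most of the work has been done in the preceding subsection, and what remains is largely bookkeeping.

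First, for Properties 1 and 2, I would walk through the Cluster Formation algorithm and check that each rerouting operation preserves all LP constraints except coverage. Initially $\solbar$ is feasible; every flow rerouting from $B_{i'} \in \Lambda$ to a heavy ball $B_i$ is capped by $AC(B_i)$, and every rerouting into a freshly added $B_t \in \co$ is bounded by $AC(B_t)$, so Constraint \ref{constr:cap1} is never violated. Rerouting preserves $\sum_{B_i} \overline x_{ij}$ for every $p_j$, so Constraint \ref{constr:flow1} holds throughout. At termination, every light ball is either in $\co$ with $\overline y_i$ reset to $1$, or has zero flow out and can safely have $\overline y_i$ reset to $0$; this gives Property 2 and simultaneously Constraint \ref{constr:open1}.

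For Property 3 I would rely on the flow-conservation nature of rerouting: the total flow into any $p_j$ is invariant and equals $1$ at initialization by Constraint \ref{constr:flow1}, so it remains $1$. Since after termination only balls in $\ch_1 \cup \co$ carry any flow, the claimed equality follows. Property 5 is simply a restatement of Lemma \ref{lem:approxfactor}.

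Property 4 is the only geometric claim and would take the bulk of a careful proof. For a cluster of $B_h \in \ch_1$, I would split into the two cases from Selection of Balls. In the first case, where $B_h$ is among the $10$ largest balls in the cluster, only $B_h$ is expanded; every non-selected $B$ in the cluster satisfies $r_B \le r_h$ and shares a point with $B_h$ by the analogue of Observation \ref{obs:heavyclusters}(1), so one triangle inequality through the common point yields $d(c_h, p_j) \le r_h + 2 r_B \le 3 r_h$ for any $p_j$ served by $B$. In the second case, let $B_{\ell'}$ be any of the $10$ selected light balls and $B_\ell$ the smallest among them; then $r_{\ell'} \ge r_\ell > r_h$ and every non-selected $B$ in the cluster has $r_B \le r_\ell \le r_{\ell'}$. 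Chaining two triangle inequalities, first from $c_{\ell'}$ to $c_h$ through their intersection and then from $c_h$ to $p_j$ through a common point of $B$ and $B_h$, gives $d(c_{\ell'}, p_j) \le (r_{\ell'} + r_h) + (r_h + 2 r_B) = r_{\ell'} + 2 r_h + 2 r_B \le 5 r_{\ell'}$. The main, if mild, obstacle is precisely this verification: one must track which radius dominates which in each case and confirm that the chained triangle inequalities collapse to the claimed factor of $5$ rather than something larger.
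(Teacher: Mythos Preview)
Your proposal is correct and matches the paper's approach: the paper presents this lemma explicitly as a summary (``We summarize our findings in the following lemma'') with no separate proof, relying on Lemma~\ref{lem:approxfactor} for the cost bound and the preceding unnamed lemma for the capacity/coverage claims; your write-up simply fleshes out that summary, in particular making the factor-$5$ triangle-inequality argument explicit where the paper says the coverage part is ``clear from the algorithm.'' One cosmetic point: your numbering is shifted by one (you treat the LP-feasibility clause as ``Property~1'' and then label items 1--4 of the lemma as Properties 2--5), which may confuse a reader comparing against the statement; aligning the labels would make the correspondence cleaner.
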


\subsection{Rounding the Second Auxilliary LP}
Note that we are given a feasible LP solution $\hat{\sol}=(\hat{x},\hat{y})$ to \ref{LP2} that has the following properties. 
\begin{enumerate}
    \item For any $B_i\in \cl_2$, $\hat{y}_i\le 10\alpha$.
    \item For any $p_j\in P_2$, $\sum_{B_i \in \cl_2} \hat{x}_{ij} \ge 4\alpha$. 
    \item For any $p_j\in P_2$ and $B_i\in \cl_2$, $\hat{x}_{ij}\le \alpha$. 
    \item $\text{cost}(\hat{\sol})\le 10\cdot \text{cost}({\sol}^*)$.
\end{enumerate}

First, we create a new solution to \ref{LP2} from $\hat{\sol}$ which has cost at most two times that of $\hat{\sol}$. We denote the new solution as well by $\hat{\sol}$. Thus, for distinction, we denote the old values by ${\hat{y}'}_i$ and ${\hat{x}'}_{ij}$. For each $y$ variable, its new value is twice the old value. Thus, $\hat{y}_i=2{\hat{y}'}_i\le 20\alpha < 1$. The last inequality follows for $\alpha \le 1/60$. And, for each $x$ variable, its new value is twice the old value. Thus, $\hat{x}_{ij}=2{\hat{x}'}_{ij}\le 2\alpha$. Note that, now, some points might receive flow of more than 1. We adjust the $\hat{x}$ values of these points so that each such point receives 1 amount of flow. We obtain the following lemma.

\begin{lemma}\label{lem:hatxprop}
There is a feasible LP solution $\hat{\sol}=(\hat{x},\hat{y})$ to \ref{LP2} that has the following properties. 
\begin{enumerate}
    \item For any $B_i\in \cl_2$, $\hat{y}_i\le 20\alpha$.
    \item For any $p_j\in P_2$, $\sum_{B_i \in \cl_2} \hat{x}_{ij} \ge 8\alpha$. 
    \item For any $p_j\in P_2$ and $B_i\in \cl_2$, $\hat{x}_{ij}\le 2\alpha$. 
    \item $\emph{cost}(\hat{\sol})\le 20\cdot \emph{cost}({\sol}^*)$.
\end{enumerate}
\end{lemma}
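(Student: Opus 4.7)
The construction of $\hat{\sol}$ from the previous solution is already laid out in the paragraph above the lemma: double every $\hat{y}'_i$ and every $\hat{x}'_{ij}$, then for each $p_j\in P_2$ whose total incoming flow $\sum_{B_i\in\cl_2}\hat{x}_{ij}$ now exceeds $1$, scale down those $\hat{x}_{ij}$ uniformly so the total equals $1$. The plan is therefore to verify the four claimed properties one by one and then to confirm that every constraint of \ref{LP2} is still satisfied. A single arithmetic fact, $8\alpha<1$ for $\alpha\le 1/60$, will carry the only nontrivial check.

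Properties (1), (3), and (4) fall out of the doubling step by inspection: $\hat{y}_i\le 2\cdot 10\alpha=20\alpha$, $\hat{x}_{ij}\le 2\alpha$, and $\cost(\hat{\sol})=2\cost(\hat{\sol}')\le 20\cost(\sol^*)$, where the primed values are those produced by the preceding lemma. The cap on the per-point total flow only decreases individual $\hat{x}_{ij}$ values, so (3) and (4) survive it untouched. Property (2) is the only place where capping could interfere: doubling alone gives $\sum_{B_i\in\cl_2}\hat{x}_{ij}\ge 2\cdot 4\alpha=8\alpha$, and if capping is triggered for $p_j$ the sum becomes exactly $1$, which still dominates $8\alpha$ because $\alpha\le 1/60$.

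For feasibility in \ref{LP2}, Constraints \ref{constr:open2} and \ref{constr:cap2} are preserved because doubling both sides in lockstep respects them and the subsequent cap only shrinks $\hat{x}$; Constraint \ref{constr:fractional_y2} follows from $\hat{y}_i\le 20\alpha<1$; and Constraints \ref{constr:coverage2} and \ref{constr:fractional_x2} survive any nonnegative scaling of $\hat{x}$. The one constraint that merits an explicit check is the demand constraint \ref{constr:flow2}: after capping, the per-point sum is $\min\{2d_j,1\}$, which is at least $d_j$ since $d_j=\sum_{B_i\in\cl_2}x_{ij}\le\sum_{B_i\in\cb}x_{ij}=1$ by Constraint \ref{constr:flow} in the original $\sol$. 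The only place where a real obstacle could appear is the interaction between the cap and property (2), and this is dispatched by the standing assumption $\alpha\le 1/60$.
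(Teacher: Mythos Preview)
Your proof is correct and follows essentially the same route as the paper: double all variables, cap per-point flow at $1$, then verify the four properties and the constraints of \ref{LP2}. You are slightly more explicit than the paper in two places---checking that the cap to $1$ still respects Property~2 via $8\alpha<1$, and verifying Constraint~\ref{constr:flow2} via $d_j\le 1$---but these are the same observations the paper uses implicitly.
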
   

\begin{proof}
First note that $\text{cost}(\hat{\sol})\le 20\cdot \text{cost}(\sol^*)$, as the values of the $y$ variables are doubled. Next, we show that $\hat{\sol}$ is feasible. 

As the $y$ variables are doubled and $\hat{x}_{ij}\le 2{\hat{x}'}_{ij}$, $\hat{x}_{ij} \le \hat{y}_i$. Thus, Constraint \ref{constr:open2} is satisfied.  

For $B_i\in \cl_2$, $\sum_{p_j \in P_2} \hat{x}_{ij}\le  \sum_{p_j \in P_2} 2{\hat{x}'}_{ij}=2\sum_{p_j \in P_2} {\hat{x}'}_{ij}\le 2{\hat{y}'}_i \cdot U_i'=\hat{y}_i \cdot U_i'$. Thus, Constraint \ref{constr:cap2} is satisfied.

As we do not decrease the $x$ variables, unless a point gets more than 1 amount of flow, Constraint \ref{constr:flow2} is also satisfied. Also, it is trivial to verify that Constraints \ref{constr:coverage2}-\ref{constr:fractional_y2} are also satisfied. 

Properties 1, 3, and 4 follows immediately. Also, Property 2 follows from the fact that previously each point received a flow of at least $4\alpha$ from the balls in $\cl_2$. Hence, the lemma follows. 
\end{proof}

We start with the fractional solution $\hat{\sol}=(\hat{x},\hat{y})$ and round it so that $\hat{y}$ becomes integral. Throughout our algorithm we modify $\hat{\sol}$ over several steps to finally obtain the desired solution. Thus whenever we refer to $\hat{\sol}$ we refer to its current value. For any $p_j\in P_2$, let $\delta_j=\sum_{B_i \in \cl_2} \hat{x}_{ij}$. Note that $\delta_j \ge 8\alpha$. Let $S$ and $\co'$ be two disjoint sets of balls which are initialized to $\cl_2$ and $\emptyset$, respectively. Throughout we also maintain that $\sum_{B_i\in S\cup \co'} \hat{x}_{ij}=\delta_j$. Note that this is true in the beginning. Our algorithm is as follows. \\\\While there is a point $p_j\in P_2$ such that $\sum_{B_i\in S} \hat{x}_{ij} > \alpha$, we do the following. 

Let $S_j$ be the set of balls in $S$ that give flow to $p_j$, i.e., $S_j$=$\{B_i \in S: \hat{x}_{ij} > 0\}$. Note that as $\sum_{B_i\in S_j} \hat{x}_{ij}=\sum_{B_i\in S} \hat{x}_{ij} > \alpha$, $\sum_{B_i\in S_j} \hat{y}_{i} \ge \sum_{B_i\in S_j} \hat{x}_{ij} > \alpha$. Find $T \subseteq S_j$ such that $\alpha \le \sum_{B_i\in T} \hat{y}_{i}\le 21\alpha$. Such a subset can always be found using a linear scan of $S_j$, as $\sum_{B_i\in S_j} \hat{y}_{i} > \alpha$ and $\hat{y}_i\le 20\alpha$ for all $B_i\in S_j$. Let $B_t$ be the largest ball in $T$. Set $\hat{y}_t=1$ and $\hat{y}_i=0$ for each $B_i\in T$. Add $B_t$ to $\co'$. Remove all balls in $T$ from $S$. Reroute the flow from all balls in $T\setminus \{B_t\}$ to $B_t$. 

\begin{lemma}
During the course of the above algorithm, the solution $\hat{\sol}$ has cost at most $20\cdot \emph{cost}(\sol^*)/\alpha$ and satisfies all the constraints of \ref{LP2} except Constraint \ref{constr:coverage2}. Moreover, for a point $p_j\in P_2$, if $\hat{x}_{ij}>0$, $d(c_i,p_j)\le 3\cdot r_i$.  
\end{lemma}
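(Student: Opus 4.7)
The plan is to check the three assertions by tracking what changes during one generic iteration of the while loop. A useful preliminary observation is that monotonicity of capacities is preserved within $\cl_2$, because the scaling factor $1/10$ applied to light balls is uniform, so if $B_t$ is the largest ball of $T \subseteq \cl_2$ then $U_t' \ge U_i'$ for every $B_i \in T$. Constraints \ref{constr:open2}, \ref{constr:flow2}, \ref{constr:fractional_x2}, and \ref{constr:fractional_y2} are straightforward: the rerouting preserves $\sum_{B_i \in S \cup \co'} \hat{x}_{ij}$ for every $p_j$ (so Constraint \ref{constr:flow2} continues to hold), the balls whose $\hat{y}$ is zeroed have their $\hat{x}$ entries zeroed at the same time, and setting $\hat{y}_t = 1$ trivially dominates any $\hat{x}_{tj} \le 1$. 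The only subtle feasibility check is Constraint \ref{constr:cap2} for the newly opened $B_t$: its total incoming flow after the rerouting is bounded by $\sum_{B_i \in T} \hat{y}_i \, U_i' \le U_t' \sum_{B_i \in T} \hat{y}_i \le 21\alpha \, U_t' \le U_t'$ using $\alpha \le 1/60$.

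For the cost bound, I plan a simple charging argument. Let $S_0 \le 20 \cdot \text{cost}(\sol^*)$ be the initial total $\hat{y}$-mass given by Lemma \ref{lem:hatxprop}. The $\hat{y}$ value of any ball that remains in $S$ is never modified, so after $k$ iterations the current total cost equals $k + (S_0 - R)$, where $R$ is the total $\hat{y}$-mass of balls removed from $S$ across all iterations. Since each selected $T_j$ satisfies $\sum_{B_i \in T_j} \hat{y}_i \ge \alpha$, we have $R \ge k\alpha$ and hence $k \le S_0/\alpha$. Combining, the current cost is at most $k(1-\alpha) + S_0 \le S_0/\alpha \le 20 \cdot \text{cost}(\sol^*)/\alpha$ at every moment of the algorithm.

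The distance property is where the argument requires real care, and this is the step I expect to be the main obstacle. If $\hat{x}_{ij} > 0$ and $B_i$ is still in $S$, then its $\hat{x}$ entries have never been touched, and the original Constraint \ref{constr:coverage2} gives $d(c_i, p_j) \le r_i$ directly. Otherwise $B_i = B_t$ for some iteration triggered by a pivot point $q$ (the point causing the while loop to fire that iteration): either $\hat{x}_{tj}$ was already positive prior to that iteration, in which case $p_j \in B_t$ originally and $d(c_t, p_j) \le r_t$, or the flow was rerouted from some $B_{i'} \in T \setminus \{B_t\}$, in which case $\hat{x}_{i'j} > 0$ just before the rerouting while $B_{i'}$ was still in $S$, so $p_j \in B_{i'}$ and $d(c_{i'}, p_j) \le r_{i'}$. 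Both $B_t$ and $B_{i'}$ belong to $T \subseteq S_q$ and hence both contain $q$, and $r_{i'} \le r_t$ since $B_t$ is the largest in $T$; the triangle inequality $d(c_t, p_j) \le d(c_t, q) + d(q, c_{i'}) + d(c_{i'}, p_j) \le r_t + r_{i'} + r_{i'} \le 3 r_t$ then closes the argument. The provenance-tracking needed to argue that every positive $\hat{x}$ entry falls into one of these cases is the delicate part; the feasibility and cost bounds are essentially mechanical once the right invariants are in place.
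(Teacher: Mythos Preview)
Your proof is correct and follows essentially the same approach as the paper: inductive feasibility per iteration (with the key capacity check $\sum_{B_i\in T}\hat y_i U_i'\le 21\alpha\,U_t'\le U_t'$ using monotonicity within $\cl_2$ and $\alpha\le 1/60$), a charging argument for the cost via $\sum_{B_i\in T}\hat y_i\ge\alpha$, and the factor-$3$ distance bound from the triangle inequality through the pivot point $q$ using that $B_t$ is the largest ball in $T$. Your write-up is in fact more explicit than the paper's, which leaves the triangle-inequality step implicit and states the cost bound by a one-line charging; your tracking of the invariant that balls still in $S$ retain their original $\hat x$-entries is exactly what justifies the provenance argument, and it is sound.
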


\begin{proof}
First, we prove the feasibility of $\hat{\sol}$ using induction on the iteration number. In the beginning, the claim holds. Now, consider a particular iteration. Note that the balls for which the $\hat{y}$ values are changed are in $T$ and the points for which the $\hat{x}$ values are changed are the set of points $P'$ that receive flow from a ball in $T$. It is sufficient to show that the constraints concerning these balls and points hold. Constraint \ref{constr:open2} is satisfied as for each such point $p_j$, and the ball $B_t$, $\hat{x}_{tj}\le \delta_j\le 1=\hat{y}_t$ and for a ball $B_i\in T\setminus \{B_t\}$, $\hat{x}_{ij} =0$. Now, we argue that the capacity constraint of the ball $B_t$ is satisfied. Note that in the beginning of the iteration, the total flow out of balls in $T$ to all points is at most $$\sum_{B_i\in T} \hat{y}_i\cdot U_i'\le U_t'\sum_{B_i\in T} \hat{y}_i\le U_t'\cdot 21\alpha < U_t'.$$ The first inequality follows from the fact that $B_t$ is the largest ball in $T$ and all the capacities of the balls in $\cl_1$ are scaled by the same factor. The last inequality follows, as $\alpha\le 1/60$. Now, as this total flow is served by $B_t$ the claim holds. Constraint \ref{constr:flow2} is also satisfied for all the points in $P'$, as the flow is only rerouted from a ball to $B_t$. The other constraints except \ref{constr:coverage2} are trivial to verify. 

Note that whenever we set $\hat{y}_t=1$, we also set $\hat{y}_i=0$ for each $B_i\in T\setminus\{B_t\}$. Thus for each ball $B_t$ we can charge all the balls in $T$. As $\sum_{B_i\in T} \hat{y}_i \ge \alpha$, the cost blow up is at most a factor of $1/\alpha$. Thus, the cost is at most $20\cdot \text{cost}(\sol^*)/\alpha$. 

Whenever we reassign flow from balls in $T\setminus \{B_t\}$ to $B_t$, for a point $p_j\in P_2$, it holds that if $\hat{x}_{tj}>0$, $d(c_t,p_j)\le 3\cdot r_t$. This is true, as $B_t$ is the largest ball in $T$. As we remove $B_t$ from $S$, no flow is ever rerouted again from or to $B_t$. Hence, the claim continues to hold for all points. 
\end{proof}

Now, note that when the while loop of the above algorithm terminates, it holds that for any $p_j\in P_2$, $\sum_{B_i\in S} \hat{x}_{ij} \le  \alpha$. Thus, $\sum_{B_i\in \co'} \hat{x}_{ij}\ge \delta_j-\alpha \ge 7\alpha$. 
%The last inequality follows from Lemma \ref{lem:hatxprop}. 
Using this fact, we compute a solution $(x',y')$ to \ref{LP2} (that violates Constraint \ref{constr:coverage2} and Constraint \ref{constr:cap2}). For any ball $B_i$ in $\co'$, set $y'_i=1$. For any $p_j\in P_2$ and $B_i$ in $\co'$, set $x'_{ij}=\min\{(1/(7\alpha))\cdot \hat{x}_{ij},1\}$. All the other $x'$ and $y'$ values are set to zero. Note that, now, each point receives a flow of at least 1. We adjust the $x'$ values so that each point receives exactly 1 amount of flow. We obtain the following lemma. 

\begin{lemma}\label{lem:round2}
The solution $(x',y')$ satisfies all the constraints of \ref{LP2} except Constraint \ref{constr:coverage2} and Constraint \ref{constr:cap2}. Moreover, 
 \begin{enumerate}
    \item $y'_i=1$ for all $B_i \in \co'$ and $y'_i=0$ for all $B_i \notin \co'$.
     \item For any $p_j\in P_2$, $\sum_{B_i \in \co'} x'_{ij} = 1$.
     \item For any $B_i\in \co'$, $\sum_{p_j \in P_2} x'_{ij} \le (1/(7\alpha))\cdot U_i'$.
     \item For any point $p_j\in P_2$, if $x'_{ij}>0$, $d(c_i,p_j)\le 3\cdot r_i$.  
     \item $\emph{cost}((x',y'))\le 20\cdot \emph{cost}(\sol^*)/\alpha$. 
 \end{enumerate}
\end{lemma}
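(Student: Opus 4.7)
The plan is to verify the five listed properties one by one directly from the construction, using the invariants that were already established when the while loop terminates. The key observation is that at the end of the while loop we have, for every $p_j\in P_2$,
\[
\sum_{B_i\in \co'} \hat{x}_{ij} \ge \delta_j - \alpha \ge 8\alpha-\alpha = 7\alpha,
\]
which is precisely the reason for the scaling factor $1/(7\alpha)$.

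Property (1) is immediate from how $y'$ is defined. For Property (2), the scaling gives
\[
\sum_{B_i\in \co'} \min\{(1/(7\alpha))\hat{x}_{ij},\,1\} \;\ge\; (1/(7\alpha))\sum_{B_i\in \co'} \hat{x}_{ij} \;\ge\; 1,
\]
so every point receives at least unit flow after scaling, and the subsequent downward adjustment drives this sum to exactly $1$ while not introducing any new nonzero coordinates. For Property (3), since $x'_{ij}\le (1/(7\alpha))\hat{x}_{ij}$ for every $B_i\in \co'$ both before and after the adjustment (adjustment only decreases values), we have
\[
\sum_{p_j\in P_2} x'_{ij} \;\le\; (1/(7\alpha)) \sum_{p_j\in P_2} \hat{x}_{ij} \;\le\; (1/(7\alpha))\cdot \hat{y}_i\cdot U_i' \;=\; (1/(7\alpha))\cdot U_i',
\]
using that $\hat\sol$ satisfies Constraint~\ref{constr:cap2} and $\hat y_i=1$ for $B_i\in \co'$. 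This is the closest one gets to Constraint~\ref{constr:cap2}, which explains why the lemma explicitly excludes it. Property (4) is inherited from the previous lemma: the final $\hat{\sol}$ produced by the while loop already satisfies that $\hat{x}_{ij}>0$ implies $d(c_i,p_j)\le 3r_i$, and since $x'_{ij}>0$ forces $\hat{x}_{ij}>0$, the same bound transfers to $x'$. Finally, for Property (5), $|\co'|$ equals the number of balls with $\hat y_i=1$ after the while loop, so
\[
|\co'|\;\le\;\text{cost}(\hat\sol)\;\le\;20\cdot\text{cost}(\sol^*)/\alpha
\]
by the previous lemma.

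The remaining items (Constraints~\ref{constr:open2}, \ref{constr:flow2}, \ref{constr:fractional_x2}, \ref{constr:fractional_y2}) are straightforward: $x'_{ij}\le 1=y'_i$ for $B_i\in \co'$ and $x'_{ij}=0$ when $y'_i=0$, the flow sum is exactly $1\ge d_j$ by Property (2), and nonnegativity and the $[0,1]$ bound on $y'$ are immediate. There is no substantive obstacle; the only point that requires care is to argue that the final downward adjustment of $x'$ cannot violate Properties (3) or (4) (which it cannot, since it only decreases values and removes no new nonzero coordinates).
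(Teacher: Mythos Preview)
Your argument tracks the paper's reasoning (which leaves this lemma essentially as a direct consequence of the construction and the preceding lemma), and Properties (1), (3), (4), (5) as well as the remaining constraints are verified correctly.

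There is one slip in your verification of Property (2). The displayed chain
\[
\sum_{B_i\in \co'} \min\{(1/(7\alpha))\hat{x}_{ij},\,1\} \;\ge\; (1/(7\alpha))\sum_{B_i\in \co'} \hat{x}_{ij}
\]
is false in general: after the while loop a single $\hat{x}_{ij}$ can be as large as $\sum_{B_i\in T}\hat y_i\le 21\alpha$ (flow from all of $T$ was rerouted to $B_t$), so $(1/(7\alpha))\hat{x}_{ij}$ can be as large as $3$, and then $\min\{\cdot,1\}<(1/(7\alpha))\hat{x}_{ij}$. The conclusion you want still holds, but via a case split: either some term already has $(1/(7\alpha))\hat{x}_{ij}\ge 1$, so the sum is trivially $\ge 1$; or every term satisfies $(1/(7\alpha))\hat{x}_{ij}<1$, in which case the $\min$ is inactive and the sum equals $(1/(7\alpha))\sum_{B_i\in\co'}\hat{x}_{ij}\ge(1/(7\alpha))\cdot 7\alpha=1$. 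With this fix the proof is complete.
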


\subsection{Combining the Two LP solutions}

Next, we compose the two rounded solutions obtained in Lemma \ref{lem:round1} and \ref{lem:round2} to construct a solution for the original instance. In the new solution $(\Tilde{x},\Tilde{y})$ we fully open the balls in $\ch_1\cup\co\cup \co'$. Also we keep all the ${x}$ values unchanged. Note that a ball $B_i$ of $\cl_1 (=\cl_2)$ can be opened in both solutions. However, as we had changed its capacity before, the total capacity that it can use is at most $U_i'+(1/(7\alpha))\cdot U_i'\le (1+1/(7\alpha))U_i/10 < U_i$. The last inequality follows by setting $\alpha=1/60$. The total cost of the new solution is at most $(90+600\alpha) \text{cost}(\sol^*)/\alpha\le 6000\cdot  \text{cost}(\sol^*)$. Hence, we obtain the following lemma.    

\begin{lemma}\label{thm:round}
 The solution $(\Tilde{x},\Tilde{y})$ satisfies all the Constraints of \ref{LP} except Constraint \ref{constr:coverage1}. Moreover, 
 \begin{enumerate}
     \item For any point $p_j\in P_1$, if $\overline {x}_{ij}>0$, $d( c_i,p_j)\le 5\cdot r_i$.  
     \item $\emph{cost}((\Tilde{x},\Tilde{y}))\le 6000\cdot  \emph{cost}(\sol^*)$. 
 \end{enumerate}
\end{lemma}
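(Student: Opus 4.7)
The plan is to verify each item of the lemma by directly combining the two guarantees already established in Lemma \ref{lem:round1} (for points in $P_1$) and Lemma \ref{lem:round2} (for points in $P_2$), and then checking that the only subtle interaction between the two solutions, namely a light ball being fully opened in both, still respects the \emph{original} capacity $U_i$ rather than the scaled capacity $U_i'$.

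First I would verify feasibility of $(\Tilde x,\Tilde y)$ with respect to \ref{LP}. Since $\Tilde y_i = 1$ exactly for $B_i\in \ch_1\cup\co\cup\co'$ and the assignments on $P_1$ and $P_2$ are kept unchanged from the respective auxiliary solutions, Constraint \ref{constr:open} follows because $\overline x_{ij}>0$ forces $B_i\in \ch_1\cup\co$ (hence $\Tilde y_i=1$) by Lemma \ref{lem:round1}(1), and similarly $x'_{ij}>0$ forces $B_i\in\co'$ by Lemma \ref{lem:round2}(1). Constraint \ref{constr:flow} holds since every $P_1$ point receives exactly one unit of flow from $\ch_1\cup\co$ by Lemma \ref{lem:round1}(2), every $P_2$ point receives exactly one unit of flow from $\co'$ by Lemma \ref{lem:round2}(2), and $P_1$ and $P_2$ partition $P$. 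Constraint \ref{constr:fractional_x} and \ref{constr:fractional_y} are immediate.

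The main obstacle, and the reason for the scaling factor of $10$ on light ball capacities, is verifying the original capacity constraint \ref{constr:cap}. For $B_i\in \ch_1$ only $P_1$ points are routed to $B_i$, and Constraint \ref{constr:cap1} with $U_i'=U_i$ gives the bound directly. For a ball $B_i$ in $\co'$ but not in $\co$, only $P_2$ flow is assigned to $B_i$ and Lemma \ref{lem:round2}(3) bounds the total flow by $(1/(7\alpha))U_i' = U_i/(70\alpha)\le U_i$ for $\alpha=1/60$. The delicate case is a light ball $B_i\in \co\cap \co'$: the $P_1$-flow is at most $U_i'=U_i/10$ and the $P_2$-flow is at most $(1/(7\alpha))U_i'$, so the total is at most $(1+1/(7\alpha))U_i/10$, which for $\alpha=1/60$ evaluates to $(1+60/7)/10 = 67/70 < 1$ times $U_i$, as desired.

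Item~(1) of the lemma then follows from Lemma \ref{lem:round1}(3) and Lemma \ref{lem:round2}(4), which give expansion factors of $5$ on $P_1$ and $3$ on $P_2$, so $5$ suffices overall. For the cost bound in item~(2), I combine Lemma \ref{lem:approxfactor} (which already gives $10|\ch_1|+|\co|\le (70+600\alpha)\cost(\sol^*)/\alpha$, and in particular bounds $|\ch_1|+|\co|$) with Lemma \ref{lem:round2}(5) ($|\co'|\le 20\cdot \cost(\sol^*)/\alpha$), yielding a total of at most $(90+600\alpha)\cost(\sol^*)/\alpha$; substituting $\alpha=1/60$ gives $(90+10)\cdot 60 = 6000$ times $\cost(\sol^*)$.
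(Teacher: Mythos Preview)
Your argument is correct and follows essentially the same route as the paper: it combines Lemmas \ref{lem:round1} and \ref{lem:round2}, checks the critical capacity constraint for a light ball opened in both solutions via the bound $U_i'+(1/(7\alpha))U_i'=(1+60/7)U_i/10<U_i$ at $\alpha=1/60$, and sums the two cost bounds to get $(90+600\alpha)\cost(\sol^*)/\alpha=6000\cdot\cost(\sol^*)$. Your write-up is in fact more explicit than the paper's own justification, which is given in the paragraph preceding the lemma statement rather than as a separate proof.
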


We note that by selecting different values of the parameters throughout the algorithm one can improve the constant in the approximation factor. However, as our main goal is to show any $O(1)$-approximation we did not pursue this.

\begin{theorem}
There is an $O(1)$-approximation for MMCC by expanding the balls by a factor of at most $5$. 
\end{theorem}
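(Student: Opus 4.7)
The plan is to assemble the guarantees already proved in this section into a single end-to-end pipeline. First I would fix $\alpha=1/60$, solve \ref{LP} to obtain $\sol^*$, and construct $\sol=(x,y)$ by thresholding each $y_i^*>\alpha$ up to $1$; then partition the point set into $P_1$ (points receiving at most $4\alpha$ total flow from $\cl$) and $P_2=P\setminus P_1$. On these two parts I would define the auxiliary programs \ref{LP1} and \ref{LP2} over the capacity-scaled instance ($U_i'=U_i/10$ for $B_i\in\cl$, $U_i'=U_i$ otherwise), and construct the feasible fractional solutions $\solbar$ and $\hat{\sol}$ from $\sol$ as specified; feasibility follows from the two feasibility lemmas already proved.

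Second, I would round the two auxiliary LPs independently. Applying the cluster-based rounding to $\solbar$ produces a set $\co$ of fully opened light balls such that, by Lemma \ref{lem:round1}, every $p_j\in P_1$ is served by some ball in $\ch_1\cup\co$ at distance at most $5r_i$, at cost at most $(70+600\alpha)\cost(\sol^*)/\alpha$. In parallel, applying the greedy doubling-and-grouping rounding to $\hat{\sol}$ produces a set $\co'$ such that, by Lemma \ref{lem:round2}, every $p_j\in P_2$ is served by some ball in $\co'$ at distance at most $3r_i$, at cost $20\cost(\sol^*)/\alpha$, at the price of letting each $B_i\in\co'$ carry up to $(1/(7\alpha))U_i'$ units of flow.

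Third, I would combine the two solutions by setting $\tilde{y}_i=1$ on $\ch_1\cup\co\cup\co'$ and $\tilde{y}_i=0$ elsewhere, keeping both assignments $\tilde{x}$ unchanged. The delicate point, and the only step in the proof that actually uses the specific numerical choices, is the capacity check: a light ball $B_i$ can simultaneously lie in $\co$ and in $\co'$, but the total flow it must carry is at most $U_i'+(1/(7\alpha))U_i'=(1+1/(7\alpha))U_i/10$, which for $\alpha=1/60$ is bounded by $U_i$, so the original capacity is respected. The resulting expansion is $\max(5,3)=5$, and summing the two cost bounds gives $(90+600\alpha)\cost(\sol^*)/\alpha\le 6000\cdot\cost(\sol^*)$, which is precisely the content of Lemma \ref{thm:round}.

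Finally, to upgrade from an integral $\tilde{y}$ with fractional $\tilde{x}$ to a fully integral solution with the same $\tilde{y}$, I would invoke the standard observation used at the start of Section \ref{sec:oldalgo}: since all $U_i$ are integers and the balls have fixed identities, any feasible fractional assignment can be rounded to an integral one with identical opening set via the integrality of the transportation polytope, as in \cite{ChuzhoyN06}. The main obstacle of the entire argument is the single capacity inequality in the combination step, which is what forces both the scaling factor $1/10$ on light-ball capacities and the threshold $\alpha=1/60$ to be chosen together; once that inequality is verified, the remainder of the proof is bookkeeping over the previously established lemmas.
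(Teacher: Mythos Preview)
Your proposal is correct and follows essentially the same approach as the paper: the theorem is obtained by assembling the section's lemmas exactly as you describe, with the combination step and its capacity inequality $(1+1/(7\alpha))U_i/10<U_i$ at $\alpha=1/60$ being the crux, and the final passage to an integral assignment via \cite{ChuzhoyN06}. The paper itself does not give a separate proof of the theorem beyond Lemma~\ref{thm:round} and the preceding discussion, so your write-up is in fact more explicit than the original.
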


\section{Uniform Capacitated Case}\label{sec:uniform}
The algorithm in the uniform case is same except the Selection of Balls step. The next lemma shows that the Selection of Balls can be performed with only $4.24$ factor expansion of the balls. 

\begin{lemma}
Using factor $4.24$ expansion of the balls the flow of any cluster can be assigned to the chosen balls without violating the capacities. 
\end{lemma}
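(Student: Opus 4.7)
The plan is to modify only the Selection of Balls step; the Cluster Formation step and the rounding of the second auxiliary LP (which contributes only factor~$3$ by Lemma~\ref{lem:round2}) go through unchanged. The key new freedom in the uniform case is that a heavy ball $B_h$ already has capacity $U$, which by Observation~\ref{obs:heavyclusters} is itself an upper bound on the total flow of its cluster; so $B_h$ alone can always absorb the cluster, and I will include a light ball in the selection only when doing so reduces the geometric expansion. I will fix the threshold $\tau=(1+\sqrt{5})/2$, chosen because it satisfies $1+2\tau=3+2/\tau=2+\sqrt{5}<4.24$, and switch between two selection strategies depending on how the top light balls of the cluster compare to $B_h$.

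For the cluster of a heavy ball $B_h$, let $B_1,\ldots,B_k$ be its light balls sorted by decreasing radius. I split into three cases, each selecting at most ten balls with total capacity at least $U$, so the budget used in Lemma~\ref{lem:approxfactor} is preserved. (i) At most nine $B_i$ satisfy $r_i>r_h$: select $B_h$ together with all such balls, give each selected light only its own original points, and put the rest on $B_h$; since every non-selected light $B_j$ satisfies $r_j\le r_h$ and intersects $B_h$, any point it serves is within $r_h+2r_j\le 3r_h$ of $c_h$, so expansion~$3$ suffices. (ii) At least ten light balls exceed $r_h$ and $r_{10}\ge\tau r_h$: select $B_1,\ldots,B_{10}$, with total capacity $10\cdot U/10=U$. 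For any selected $B_i$ and any point $p$ served by a non-selected ball of the cluster (either $B_h$, or some $B_j$ with $j\ge 11$, where $r_j\le r_{10}\le r_i$), the triangle inequality through $c_h$ and $c_j$ yields $d(c_i,p)\le r_i+2r_h+2r_j$; using $r_j\le r_i$ and $r_h\le r_i/\tau$ this is at most $(3+2/\tau)r_i$. (iii) At least ten light balls exceed $r_h$ but $r_{10}<\tau r_h$: select $B_h$ together with $B_1,\ldots,B_9$ (capacity $U+9U/10\ge U$); every non-selected light $B_j$ has $r_j\le r_{10}<\tau r_h$, so $B_h$ covers its points with expansion $1+2r_j/r_h<1+2\tau=2+\sqrt{5}$.

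The main delicate point -- and the only genuinely new ingredient relative to the non-uniform analysis -- is the choice of $\tau$. Too small, and Cases~(i) and~(iii) are expensive because $B_h$ must stretch to reach light balls up to $\tau$ times its own radius; too large, and Case~(ii) is expensive because the smallest selected light ball $B_{10}$ must stretch to reach points near $c_h$. Equating $1+2\tau$ with $3+2/\tau$ forces $\tau^{2}-\tau-1=0$, whose positive root is the golden ratio, yielding the matching value $2+\sqrt{5}<4.24$. Everything else is routine: per-case triangle-inequality bookkeeping and a capacity check using the scaling $U_i'=U/10$ of Section~\ref{sec:newalgo}.
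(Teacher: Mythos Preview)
Your proposal is correct and follows essentially the same approach as the paper: the same golden-ratio threshold $\tau=c=(1+\sqrt5)/2$, the same three-way case split (your cases (i), (iii), (ii) correspond to the paper's ``$B_h$ among the top ten,'' ``$r_h\ge r_\ell/c$,'' and ``$r_h<r_\ell/c$'' respectively), the same ball selections and flow assignments, and the same triangle-inequality and capacity computations. The only addition is your explicit explanation of why equating $1+2\tau$ with $3+2/\tau$ pins down $\tau$, which the paper leaves implicit.
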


\begin{proof}
Consider any cluster of a heavy ball $B_h \in \ch_1$. Let $c=(1+\sqrt{5})/2$. If $B_h$ is one of the top 10 largest balls in the cluster, then select all the balls larger than $B_h$ and also $B_h$. Only $B_h$ is expanded by a factor of 3. The flow rerouted from any selected ball of $\cl_1$ to $B_h$ is assigned to the selected ball. Note that for the remaining balls of $\cl_1$ which are in the same cluster and not chosen, are smaller than $B_h$ and thus can be covered by a factor 3 expansion of $B_h$. The remaining flow is assigned to $B_h$. Note that in this case the capacities of the selected light balls are trivially satisfied. Also, the remaining flow assigned to $B_h$ must have an amount at most $U_h$. Thus, the capacity constraint of $B_h$ is satisfied.

Now, suppose $B_h$ is not one of the top 10 largest balls. Let $B_{\ell}$ be the $10^{th}$ largest ball of this cluster. Also, let $r_h$ and $r_{\ell}$ be the radius of $B_h$ and $B_{\ell}$, respectively. Now, there can be two cases (i) $r_h \ge r_{\ell}/c$ or (ii) $r_h <  r_{\ell}/c$. In the first case, we select the top 9 largest balls all of which are in $\cl_1$ and also $B_h$. The flow rerouted from any selected ball (except $B_h$) to $B_h$ is assigned to the selected ball. Now consider the remaining flow assigned to the cluster. Also consider a point $p_j$ which receives a part of this flow and not in any of the balls selected from $\cl_1$. Then, by triangle inequality, the distance between $p_j$ and the center $c_h$ of $B_h$ is at most $r_h+2r_{\ell}\le r_h+2cr_h\le 4.24 r_h$. We expand $B_h$ by the factor $4.24$ and assign the remaining flow to $B_h$. Selected balls which are in $\cl_1$ are not expanded. The capacity constraints are also satisfied due to the same reason mentioned above.

In the second case, the top 10 largest balls are selected all of which are in $\cl_1$. The flow rerouted from any selected ball to $B_h$ is assigned to the selected ball. Now consider the remaining flow assigned to the cluster. Also consider a point $p_j$ which receives a part of this flow and not in any of the selected balls. Let $B_t=B(c_t,r_t)$ be a selected ball. Then, by triangle inequality, the distance between $p_j$ and $c_t$ is at most $r_t+2r_h+2r_{\ell}\le r_t+2r_{\ell}/c+2r_{\ell}\le (3+2/c)r_t\le 4.24 r_t$. The second last inequality follows, as $r_{\ell}$ is the smallest of the selected balls. We expand each selected ball by the factor $4.24$. The remaining flow is assigned arbitrarily to selected balls respecting their capacity. Let the total amount of flow rerouted from the selected 10 light balls to $B_h$ in Cluster Formation step be $f$. The total available capacity of all these balls is at least $10U_{\ell}'-f$, as  $B_{\ell}$ is the smallest radius ball among these 10 balls. Now, as the capacity of each ball of $\cl_1$ is reduced to a factor 10 of the original capacity and the capacity of $B_h$ remains unchanged, $U_h\le 10U_{\ell}'$. Hence, the available capacity of all these 10 balls is at least $U_h-f$. As the remaining flow is at most $U_h-f$, it follows that the capacity constraints of these balls are satisfied.        

%It is clear from the algorithm that the coverage constraints are satisfied by expanding the balls. Here we consider the expansion factor and capacity constraints of the selected balls. Consider any cluster of a heavy ball $B_h \in \ch_1$. If $B_h$ is one of the top 10 largest balls, the expansion factor is at most 3. 
%Next, suppose $B_h$ is not one of the top 10 largest balls. Then, there are two cases (i) $r_h \ge r_{\ell}/c$ and (ii) $r_h <  r_{\ell}/c$. 
\end{proof}

\begin{theorem}
There is an $O(1)$-approximation for MCC by expanding the balls by a factor of at most $4.24$. 
\end{theorem}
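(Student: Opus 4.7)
The plan is to observe that MCC is simply the special case of MMCC in which all capacities $U_i$ are equal to a common value $U$; in particular, the monotonicity assumption on capacities holds trivially. Hence I would run the entire algorithm developed in Section~\ref{sec:newalgo} unchanged: solve \ref{LP}, define heavy and light balls with respect to $\sol$, partition $P$ into $P_1, P_2$ according to whether the light flow received is at most $4\alpha$, construct the two auxilliary LPs \ref{LP1} and \ref{LP2} with scaled capacities $U_i' = U_i/10$ for light balls, round each via the Cluster Formation / Selection procedures and combine as in Lemma~\ref{thm:round}. The cost bound $\text{cost}((\Tilde x, \Tilde y)) \le 6000 \cdot \text{cost}(\sol^*)$ transfers verbatim, since that bound never used anything beyond the LP feasibility and the capacity scaling; this delivers the $O(1)$-approximation factor.

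The only place where a ball expansion larger than $4.24$ was introduced is the Selection of Balls subroutine, which in the general MMCC setting needed a factor $5$. For MCC I would replace that subroutine by the one given in the immediately preceding lemma, which guarantees that the selection can be performed with expansion at most $4.24$. This step is sound because its proof uses only triangle inequality together with the facts that (i) in a heavy-ball cluster each member-ball intersects the cluster center's ball (Observation~\ref{obs:heavyclusters}(1)), and (ii) the capacities of all light balls have been scaled by the same uniform factor $1/10$, so that the total available capacity argument $U_h \le 10 U_{\ell}'$ still closes. Both of these properties hold exactly as in the MMCC case, and the uniform capacities of MCC make monotonicity automatic, so no additional case analysis is needed.

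Putting the pieces together: the Cluster Formation and Combining stages are unchanged, giving feasibility with respect to all constraints except \ref{constr:coverage}, while the improved Selection of Balls lemma ensures that by expanding each selected ball by at most $4.24$, the coverage constraint is satisfied and the capacity constraints of the selected balls are respected. Combined with the cost bound from Lemma~\ref{thm:round}, this yields an $O(1)$-approximation for MCC with expansion factor $4.24$, establishing the theorem.

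The main obstacle I anticipate is essentially bookkeeping rather than mathematical: verifying that each ingredient of the MMCC analysis (in particular Lemma~\ref{lem:boundonFt} through Lemma~\ref{lem:approxfactor}, and the capacity accounting in the combining step) goes through in the uniform setting without any hidden dependence on the relaxed Selection of Balls expansion factor of $5$. Since the cost analysis is entirely internal to Cluster Formation and never invokes the coverage radius chosen in Selection of Balls, this verification is routine, and the improvement from $5$ to $4.24$ is cleanly localized to the last step via the golden-ratio-based case split already proved.
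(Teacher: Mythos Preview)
Your proposal is correct and matches the paper's approach essentially verbatim: the paper states that ``the algorithm in the uniform case is same except the Selection of Balls step,'' invokes the preceding $4.24$-expansion lemma for that step, and concludes the theorem. Your observation that the cost analysis is independent of the expansion factor used in Selection of Balls, and that the only change needed is swapping in the golden-ratio-based selection lemma, is exactly the paper's (implicit) reasoning.
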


\section{Conclusion}\label{sec:conclude}
In this paper, we improve the expansion factor of the balls for MCC and MMCC to 4.24 and 5, respectively, in the context of obtaining constant approximation. Our approximation factor is a large constant. But, it is possible to improve this factor by setting different values of parameters in the algorithm. Note that the lower bound on the expansion factor is still 3. So, one obvious problem is to reduce the gap further. Another interesting problem is to design a true constant approximation for the Euclidean version of MCC, which does not expand the balls. We note that this problem is open even in the plane.   

Note that if the capacities are not monotonic, no $(O(1),O(1))$-approximation is known. On the other hand, the lower bound on the expansion factor even in this case is $3-\epsilon$, similar to the uniform capacity case. So, a very natural and interesting direction of research is to study this most general version of the problem.   

\bibliography{cap-covering}

\end{document}